\newcommand{\reals}{\mathbb R}
\newcommand{\eps}{\varepsilon}
\def\D{\EuScript{D}}
\newcommand{\M}{\EuScript{M}}
\def\X{\mathsf{X}}
\DeclareMathOperator{\VD}{VD}
\DeclareMathOperator{\VC}{VC}
\DeclareMathOperator{\polylog}{polylog}
\DeclareMathOperator{\cost}{cost}
\DeclareMathOperator{\optcost}{cost\lower 0,3ex
  \hbox{$\mskip -1,5mu^*\mskip-1,5mu$}} 
\newcommand{\softO}{\widetilde{O}}
\newcommand\match{M}
\newcommand\approxmatch{\widetilde{M}}
\def\approxmap{\widetilde{\M}}
\newcommand{\etal}{\emph{et~al.}\xspace}
\def\marrow{\marginpar[\hfill$\longrightarrow$]
     {\textcolor{red}{$\longleftarrow$}}}
\newcommand{\remarkB}[3]{\marrow\textcolor{blue}{\textsc{#1 #2:}} 
     \textcolor{red}{\textsf{#3}}}
\renewcommand{\remarkB}[3]{}
\title{Approximate Minimum-Weight Matching with Outliers under
  Translation 
}
\titlerunning{Approximate Minimum-Weight Partial Matching under 
Translation}
\author{Pankaj K. Agarwal}{Department of Computer Science, 
Duke University, Durham, NC 27708, USA}{pankaj@cs.duke.edu}{}
{Partially supported by NSF grants 
CCF-15-13816, CCF-15-46392, IIS-14-08846 and ARO 
grant W911NF-15-1-0408.}
\author{Haim Kaplan}{School of Computer Science, Tel Aviv University, 
Tel~Aviv 69978, Israel}{haimk@tau.ac.il}{}{}
\author{Geva Kipper}{School of Computer Science, Tel Aviv University, 
Tel~Aviv 69978, Israel}{gevakip@gmail.com}{}{}
\author{Wolfgang Mulzer}{Institut f\"ur Informatik, 
Freie Universit\"at Berlin, 14195 Berlin, Germany}
{mulzer@inf.fu-berlin.de}
{https://orcid.org/0000-0002-1948-5840}
{Partially supported by DFG grant MU/3501/1 and ERC STG 757609.}
\author{G\"unter Rote}{Institut f\"ur Informatik, 
Freie Universit\"at Berlin, 14195 Berlin, Germany}
{rote@inf.fu-berlin.de}{https://orcid.org/0000-0002-0351-5945}{}
\author{Micha Sharir}{School of Computer Science, Tel Aviv University, 
Tel~Aviv 69978, Israel}{michas@tau.ac.il}{}{Partially supported by 
ISF Grant 892/13, by the Israeli Centers of Research 
Excellence (I-CORE) program (Center No.~4/11), by the 
Blavatnik Research Fund in Computer Science at Tel Aviv University,
and by the Hermann Minkowski-MINERVA Center for Geometry at 
Tel Aviv University.}
\author{Allen Xiao}{Department of Computer Science, Duke University, 
Durham, NC 27708, USA}{axiao@cs.duke.edu}{}{Partially supported by 
NSF CCF-15-13816, CCF-15-46392, IIS-14-08846 and ARO 
grant W911NF-15-1-0408.}
\authorrunning{P. K. Agarwal, H. Kaplan, G. Kipper, W. Mulzer, 
G. Rote, M. Sharir, A. Xiao}
\subjclass{\medskip
G.2.1 Discrete Mathematics: Combinatorics} 
\keywords{Minimum-weight partial matching, Pattern matching, 
Approximation}
\begin{document}

\maketitle

\begin{abstract}
Our goal is to compare two planar point sets by finding subsets 
of a given size such that a minimum-weight matching 
between them has the smallest weight. This can be done 
by a translation of one set that minimizes 
the weight of the matching. We give efficient algorithms 
(a)  for finding approximately optimal matchings, when the cost 
of a matching is the $L_p$-norm of the tuple of the Euclidean 
distances between the pairs of matched points, for any 
$p\in [1,\infty]$, and (b)~for constructing small-size approximate 
minimization (or matching) diagrams: partitions of the translation space 
into regions, together with an approximate 
optimal matching for each region.
\end{abstract}

\section{Introduction}

The following problem arises in pattern matching:
given point sets $A$, $B$, with
$|A| = m$
and
$|B| = n$,
and $k\le\min\{m,n\}$, find 
subsets $A'\subseteq A$ 
and $B'\subseteq B$ with $|A'| = |B'| = k$ and a transformation $R$ 
that matches $R(A)$ and $B$ as closely as possible,
see Figure~\ref{fig:example}.
\begin{figure}[ht]
  \centering
  \includegraphics[scale=1]{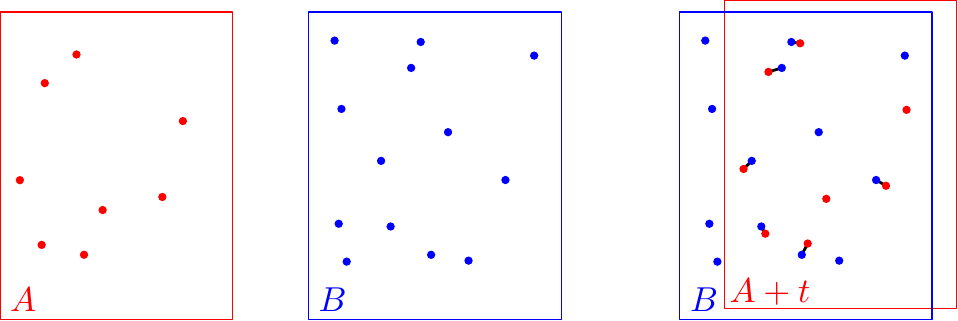}
  \caption{Two sets $A$ and $B$, and a matching of size $k=6$ after
    translation}
  \label{fig:example}
\end{figure}
We think of $A$ as a collection of features, or interest points of
some pattern, that we want to match, bijectively, with
similar features in a large image $B$. Moreover, since the 
coordinate frames 
for $A$ and $B$ are not necessarily aligned, we want to transform
 $A$ to get the best possible fit.

This problem comes in many variants, depending on the class of 
permissible transformations $R$ and on the similarity measure for the 
match.  Here, we want to match $A'$ and $B'$ in a
one-to-one manner, where the cost of a matching depends on 
the distances between matched points.  
Moreover, we only consider translations as permissible 
transformations, and write $A+t$ for the set $A$ translated 
by a vector $t\in\reals^2$.
A
feasible solution is given by a translation $t \in \reals^2$ and by a
\emph{matching} $\match \subset A \times B$ of size $k$ (in short, a \emph{$k$-matching}): a set of $k$ 
pairs $(a,b) \in  A \times B$ so that any point $a \in A$ or $b\in B$
occurs in at most one pair. The parameter $k$ is part of the input.
We consider the $L_p$-cost of such a solution, for 
some~$p \in  [1, \infty]$:
\begin{equation}
  \label{eq:cost}
  \cost_p(\match, t)=
\cost(\match, t)
  := \begin{cases}
\left [ \frac{1}{k} \sum_{(a,b) \in \match} \|a + t - b\|^p \right
]^{1/p}&
\text{ for finite $p$,}\\
\max_{(a,b) \in \match} \|a + t-b\|
&\text{ for $p=\infty$.}\\
\end{cases}
\end{equation}
We will regard $p$ as a fixed constant and will omit it from the notation.
Noteworthy special cases arise when $p = 1$ (sum of distances, 
minimum-weight Euclidean matching), $p = 2$ (root-mean-square matching, 
in short RMS matching), and $p = \infty$ (bottleneck matching). 
In (\ref{eq:cost}), 
we always measure the distances $\|a + t-b\|$ by the Euclidean norm. 
It is not hard to extend the treatment to
other norms, but we stick with Euclidean distances for simplicity.

One important special case occurs when we have a small point set $A$ 
(the \emph{pattern}) that we want to locate within a larger set $B$ 
(the \emph{image}), and $k = |A| < |B|$.  This problem was considered 
for $p=2$ by Rote~\cite{Ro10} and in subsequent 
work~\cite{AHJ*,HJK}, under the name \emph{RMS partial matching}.
Another important instance has $|A| \approx |B|$ and $k$ 
slightly smaller than $|A|, |B|$. Now, we
want to  discard a few \emph{outliers} from each set, to
allow for some 
erroneous data.

For a fixed translation vector 
$t \in \reals^2$, we define 
$ 
  \optcost(t) = \min_{\match} \cost(\match, t) 
$
to be the cost of the minimum-cost $k$ matching 
between $A + t$ 
and $B$. We set 
$
  \match_t = \arg\min_{\match} \cost(\match, t)
$
to be an optimal matching from $A + t$ to $B$, i.e., 
$\optcost(t) = \cost(\match_t, t)$.

Let $\Pi$ be the set of all $k$-matchings from $A$ into $B$. 
The function $\optcost$ is the \emph{lower envelope} (i.e., the
pointwise minimum) of the set of 
functions $F = \{t \mapsto \cost(\match, t) \mid \match \in \Pi\}$. 
The vertical projection of this lower envelope induces a 
planar subdivision, called the \emph{minimization diagram} 
of $F$. It is denoted by $\M := \M(A, B)$. Each face $\sigma$ of
$\M$ is a maximal connected set of points $t$ for which $\optcost(t)$ is 
realized by the same matching $\match_\sigma$. The \emph{combinatorial 
complexity} of $\M$ is the number of its faces. We refer to $\M$ as the
\emph{($k$-)matching diagram} of $A$ and $B$.
We are interested in two questions:

\begin{description}
\item[(P1)] Compute $t^* = \arg\min_t \optcost(t)$ and 
$\match^* := \match_{t^*}$.
\item[(P2)] What is the combinatorial complexity of $\M(A,B)$, and how 
quickly can it be computed?
\end{description}

These questions have been studied, $p = 2$, 
by Rote~\cite{Ro10} and by 
Ben-Avraham~\etal~\cite{AHJ*}.
Two challenging, still open problems are whether the size of $\M$ is 
polynomial in both $m$ and $n$, and whether 
$t^*$ and $\match^*$ can be computed in polynomial time. 
These previous works have raised the questions only for the case 
$p = 2$, but they are open for arbitrary $p<\infty$.
There is extensive work on pattern matching and on computing similarity 
between two point sets. We refer the reader to~\cite{AG,Vel} for 
surveys. 
Here, we confine ourselves
to a brief discussion of work directly related to the problem at hand.

Much work has been done on computing a minimum-cost \emph{perfect
matching} in geometric settings. Here, $n = |A| = m = |B|=k$.
A minimum-cost perfect matching, for any $L_p$-norm, can be found in 
$\softO(n^2)$ time~\cite{AES,KMS*,PA}.\footnote{%
  The notation $\softO(\cdot)$ hides polylogarithmic factors in 
  $n$, $m$, and also polylogarithmic factors in $1/\eps$, when we only
  seek a $(1 + \eps)$-approximate solution.}
These algorithms are based on the Hungarian algorithm for 
a minimum-cost maximum matching in a bipartite graph,
and are made more efficient than the general technique by using 
certain efficient geometric data structures.
Thus, they also work when the two point sets $A$ and $B$ have different 
sizes, say, $|A| = n$ and $|B| = m$, with $k=m \leq n$. In this case, 
the running time of the algorithm is $\softO(mn)$.

Approximation algorithms for the minimum-weight perfect matching in 
geometric settings have been developed in a series of papers; see, 
e.g.,~\cite{SA122} and the references therein. For the case when the 
weight of a 
matching is the sum of the Euclidean lengths of its edges, a 
near-linear algorithm is known~\cite{SA122}. If the 
weight is the $L_p$-norm of the Euclidean lengths of the edges, for 
some $p>1$, then the best known algorithm runs in 
$\softO(n^{3/2})$ time~\cite{SA12,VA99}. In particular, for RMS matching
($p=2$) and for $p = 1, \infty$, the time for 
finding a $(1 + \eps)$-approximate 
optimal matching is $\softO(n^{3/2})$, and for a general $p$, the 
running time is 
$\softO\big(\tfrac{n^{3/2}}{\eps^{3/2}}\big)$. 
These algorithms use the scaling method by Gabow and 
Tarjan~\cite{GT89} that at each scale computes a minimum-weight matching 
by finding $n$ augmenting paths in $O(\sqrt{n})$ phases, 
where each phase takes $\softO(n)$ time (see also~\cite{GoldbergHeKaTa17}). 
If $|A| = n$, $|B|= m$, and 
$k=m \le n$, then the $m$ augmenting paths can be found in $O(\sqrt{m})$ 
phases, each of which takes $\softO(n)$ time. Hence, the total 
running time in this case is $\softO(\sqrt{m}n)$, for $p=1, 2, \infty$, 
or $\softO(\sqrt{m}n/\eps^{3/2})$, for general $p$.
When $k\le m\le n$, the minimum-weight $k$-matching is constructed, 
using the geometrically enhanced version of the Hungarian algorithm, 
in $k$ augmenting steps, each of which can be performed in 
$O(n \polylog(n))$ time. That is, the exact 
minimum-weight $k$-matching can be computed in $\softO(kn)$ time. 
The case of computing an approximate $k$-matching is somewhat trickier. 
If $k=\Theta(m)$,  one can show, adapting the technique 
in \cite{SA12}, that the running time remains 
$O(\sqrt{m}n \polylog(n))$. For smaller values of $k$, one 
can still get a bound depending on $k$, but we do not treat this case 
in the paper. It is also much less motivated from the 
point of view of applications.

Cabello~\etal~\cite{CGKR} considered
optimal shape matching under translations and/or rotations.
They considered the more general setting of \emph{weighted} point
sets, where each point of $A$ and $B$ comes with a multiplicity or
``weight''. Accordingly, the similarity criterion is
the \emph{earth-mover's distance}, or
\emph{transportation distance}, which measures the minimum
amount of work necessary to transport all the weight from $A$
to $B$, where transporting a weight $w$ by distance $\delta$ costs 
$w\cdot\delta$.
For the special case of unit weights, this reduces, via the integrality
of the minimum-cost flows,
to one-to-one matching.

We apply several ideas from Cabello~\etal's paper:
(1) the use of \emph{point-to-point translations}
to get constant-factor approximations,
(2) the selection of a random subset of these transformations to get
fast Monte Carlo algorithms, and
(3)
 tiling the vicinity of these transformations in the parameter
 space by an $\eps$-grid to get $(1+\eps)$-approximations.
 We go beyond the results of
 Cabello~\etal\ in the following aspects.
\begin{itemize}
\item 
 We give a greedy ``disk-eating'' algorithm in the space of translations
to get an improved deterministic approximation
(Theorem~\ref{disk-eating}).
 This idea could be useful for other problems.
\item
We introduce
\emph{approximate matching diagrams}: Such a diagram is a subdivision of the 
translation plane
together with a matching for each cell. This matching is approximately optimal
for every translation in the cell.
As a consequence, this diagram provides approximate
optimal matchings for \emph{all} translations.
We show that there is an
approximate matching diagram of small size, and we describe how to compute 
it efficiently (Section~\ref{approximate-diagram}).
\item
  Less importantly,
our results cover a broader class of similarity measures:
The lengths of the $k$ matching edges can be aggregated in the
objective function using any $L_p$ norm, $p\ge 1$, whereas 
Cabello~\etal only dealt with the $L_1$ norm.
By indentifying the crucial property that lies at the basis of the approximation, namely
Lipschitz continuity (Corollary~\ref{cor:shift}), this generalization comes without much additional effort.
Our results are also slightly more general because we allow outliers
(i.e., $k<\min\{m,n\}$), whereas
Cabello~\etal\ 
match the smaller set completely.
\item
By using better data structures, some of our algorithms are
more efficient.
\end{itemize}
We present approximate solutions for (P1) and (P2).
They use approximation algorithms for matching between \emph{stationary} 
sets as a black box.
We write $W(m, n, k, \eps)$ for the time that is needed to compute
a $(1+\eps)$-approximate minimum-weight matching of size   $k$
between two given (stationary) sets $A$ and $B$ of $m$ and $n$ points in
the plane,
where the weight is the $L_p$-norm of the vector or Euclidean edge
lengths, for
$k \leq \min\{m, n\}$ and for a given $\eps \geq 0$.
We abbreviate $W(m, n, k, 0)$ as simply $W(m, n , k)$. 
Table~\ref{tab:matching}
summarizes the known running times.
\begin{table}
\centering
\begin{tabular}{l | l | l | p{33mm}}
norm &  & time & reference\\
\hline
\hline
  $p \in [1, \infty]$ & exact & \raise2pt\hbox{\strut}%
                                $W(m,n,k) = \softO(kn)$ & Hungarian 
method, geo
         metric version~\cite{AES,KMS*,PA}\\
\hline
$p \in \{1, 2, \infty\}$ & $(1+\eps)$-approximate & \raise2pt\hbox{\strut}%
$W(m, n, k, \eps) = \softO(\sqrt{m}n)$ & \cite{SA12}\\
\hline
$p \in [1, \infty]$ & $(1+\eps)$-approximate 
        & \raise2pt\hbox{\strut}%
          $W(m, n, k, \eps) = \softO(\sqrt{m}n/\eps^{3/2})$ & \cite{VA99}\\
\end{tabular}

\caption{Known time bounds for various matching problems
between stationary sets. We assume $m\le n$, and in the last two
rows $k = \Theta(m)$.}
\label{tab:matching}
\end{table}
We obtain two main results:
\begin{itemize}
\item[(i)] 
We present an $\softO(mn + \tfrac{mn}{\eps^2 k} W(m, n, k, \eps/2))$-time 
algorithm for computing
  a translation vector $\tilde{t}$ and a $k$-matching $\approxmatch$ 
  between $A$ and $B$ such 
  that $\cost(\approxmatch, \tilde{t}) \leq (1 + \eps)\optcost(t^*)$.
\item[(ii)] 
 We present an 
 $\softO(mn + \tfrac{mn}{\eps^2 k} W(m, n, k, \eps/2))$-time algorithm for  
   computing a $(1 + \eps)$-approximate matching diagram of size 
   $O\big(\tfrac{n}{\eps^2}\log \tfrac{1}{\eps}\big)$, i.e.,
   a planar subdivision $\approxmap$ and a collection of
   $k$-matchings $\match_\sigma$, one matching for each face $\sigma$ 
   of $\approxmap$,
   such that for each face $\sigma$ of $\approxmap$ and for every 
   $t \in \sigma$,
   $\cost(\match_\sigma, t) \le (1+\eps) \optcost(t)$. 
\end{itemize}
The paper is organized as follows.
We start with simple solutions
to (P1) and (P2) with constant-factor 
approximations
(Section~\ref{sec:simple}).
We then refine them to obtain $(1 + \eps)$-approximate solutions,
in Section~\ref{sec:e-approx}.
Finally, 
we present improved algorithms, which attain the bounds claimed 
in (i) and (ii), in Section~\ref{sec:improved}.
All our statements hold for $p=\infty$.
In some cases, the proofs require a special treatment for this case,
but for brevity, we will mostly omit the treatment for $p=\infty$.
As in~\cite{CGKR}, the techniques used here can probably 
be extended to handle also rotations and rigid motions. We hope 
to present this extension in the full version.

\section{Simple Constant-Factor Approximations}
\label{sec:simple}

The following lemma  establishes a Lipschitz condition for the cost 
of a matching of size $k$.

\begin{lemma} \label{lem:shift}
Let $\match \subset A \times B$ be a matching of size $k$,
and let $t, \Delta \in \reals^2$ be two translation vectors.  
Then, for any $p \in [1, \infty]$, the cost under the
$L_p$-norm satisfies 
\begin{equation}\label{eq:liplp}
  \cost(\match, t + \Delta) \leq \cost(\match, t) + \|\Delta\|.
\end{equation}
\end{lemma}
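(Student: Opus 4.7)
The plan is to reduce the statement to two simple inequalities applied in sequence: the triangle inequality on each individual edge, and then Minkowski's inequality to aggregate these per-edge bounds into a single bound on the $L_p$-norm of the edge-length vector.

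First I would introduce, for each pair $(a,b)\in\match$, the quantities $d_{a,b}(t)=\|a+t-b\|$ and $d_{a,b}(t+\Delta)=\|a+(t+\Delta)-b\|=\|(a+t-b)+\Delta\|$. A single application of the triangle inequality in $\reals^2$ yields
\begin{equation*}
  d_{a,b}(t+\Delta)\le d_{a,b}(t)+\|\Delta\|,
\end{equation*}
uniformly in $(a,b)\in\match$.

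Next I would lift this pointwise bound to the aggregated cost. For $p=\infty$, taking the max over $(a,b)\in\match$ on both sides gives the statement immediately, since the $\|\Delta\|$-term is the same for all edges. For finite $p$, since all quantities are nonnegative, monotonicity of the $p$-th power mean gives
\begin{equation*}
  \cost(\match,t+\Delta)
  \le \Bigl[\tfrac1k\sum_{(a,b)\in\match}(d_{a,b}(t)+\|\Delta\|)^p\Bigr]^{1/p}.
\end{equation*}
Then Minkowski's inequality applied to the vectors $(d_{a,b}(t))_{(a,b)\in\match}$ and $(\|\Delta\|,\dots,\|\Delta\|)\in\reals^k$, under the normalized $L_p$-norm $\|x\|_{(p)}:=(\tfrac1k\sum_i|x_i|^p)^{1/p}$, bounds this by $\cost(\match,t)+\|\Delta\|$, since the normalized $L_p$-norm of the constant vector is exactly $\|\Delta\|$.

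I don't anticipate a real obstacle: the only subtlety is to remember the normalization factor $1/k$ in the definition of $\cost_p$, which is precisely what makes the constant-vector contribution collapse to $\|\Delta\|$ (rather than $k^{1/p}\|\Delta\|$) and thus makes the Lipschitz constant independent of $k$ and of $p$. The $p=\infty$ case needs a one-line separate treatment, which I would handle explicitly, as the paper states that the $p=\infty$ case is generally valid but occasionally requires a distinct argument.
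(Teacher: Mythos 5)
Your proposal is correct and follows essentially the same route as the paper's proof: triangle inequality per edge, monotonicity of the $L_p$-norm in nonnegative components, and Minkowski's inequality, with the $1/k$ normalization making the constant vector contribute exactly $\|\Delta\|$. The only cosmetic difference is that you work with the normalized $p$-th power mean directly while the paper divides the unnormalized $L_p$-norm by $k^{1/p}$, and you spell out the (trivial) $p=\infty$ case, which the paper leaves implicit.
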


\begin{proof}
  Let $\match =\{(a_1,b_1),\ldots,(a_k,b_k)\}$, and
  define two nonnegative $k$-dimensional vectors 
 $\vec{v}$ and~$\vec{w}$  by
$\vec{v}_i = \| a_i + t - b_i\|$ and 
$\vec{w}_i = \| a_i + t + \Delta - b_i \|$,
for $1\le i \le k$. By the triangle inequality for the
Euclidean norm, we have, for each $i$,
$
  \vec{w}_i =  \| a_i + t + \Delta - b_i \| \leq \| a_i + t - b_i \| + 
  \|\Delta\| = \vec{v}_i + \|\Delta\|$.
Thus, we obtain the component-wise inequality
$
  \vec{w} \leq \vec{v} + \|\Delta\| \cdot \vec{1}$,
where $\vec{1}$ denotes
the $k$-dimensional vector in which all components are $1$.
Now, 
\[
\cost(\match, t + \Delta) = \frac{\| \vec{w} \|_p}{k^{1/p}}
\leq \frac{\Big\| \vec{v} + \|\Delta\|\cdot \vec{1} 
\Big\|_p}{k^{1/p}}  \leq 
\frac{\| \vec{v} \|_p}{k^{1/p}} + \|\Delta\|\cdot 
\frac{\|\vec{1} \|_p}{k^{1/p}} =
\cost(\match, t) + \| \Delta \|,
\]
using the definition~(\ref{eq:cost}) of $\cost$, the fact that the
$L_p$-norm is a monotone function in the components whenever they are 
nonnegative,
and the triangle inequality for the $L_p$-norm.
\end{proof}
Here is an immediate corollary of Lemma~\ref{lem:shift}:
\begin{corollary}[Lipschitz continuity of the optimal cost]
\label{cor:shift}
For any two translation vectors $t_1, t_2 \in \reals^2$,
$\optcost(t_2) \le \optcost (t_1) + \|t_2 - t_1\|$. 
\end{corollary}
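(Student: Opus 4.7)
The plan is to apply Lemma~\ref{lem:shift} to the matching that witnesses the optimal cost at $t_1$, and then use the minimality of $\optcost(t_2)$ over all $k$-matchings. This reduces the statement to essentially one line, since the Lipschitz bound on a fixed matching transfers directly to the lower envelope.

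Concretely, first I would let $\match_{t_1}$ denote an optimal $k$-matching at $t_1$, so that $\optcost(t_1) = \cost(\match_{t_1}, t_1)$. Setting $\Delta = t_2 - t_1$, Lemma~\ref{lem:shift} applied to the fixed matching $\match_{t_1}$ gives
\[
\cost(\match_{t_1}, t_2) \;=\; \cost(\match_{t_1}, t_1 + \Delta) \;\le\; \cost(\match_{t_1}, t_1) + \|\Delta\| \;=\; \optcost(t_1) + \|t_2 - t_1\|.
\]
Then, since $\optcost(t_2)$ is the minimum of $\cost(\match, t_2)$ over all $k$-matchings $\match$, and $\match_{t_1}$ is one such matching, we conclude $\optcost(t_2) \le \cost(\match_{t_1}, t_2) \le \optcost(t_1) + \|t_2 - t_1\|$, as required.

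There is no real obstacle here: the statement is exactly the pointwise Lipschitz bound from Lemma~\ref{lem:shift} lifted to the lower envelope, which is a standard fact (a pointwise minimum of $L$-Lipschitz functions is $L$-Lipschitz). The only thing to note is that we do not need a symmetric treatment for $p=\infty$, since Lemma~\ref{lem:shift} already covers $p\in[1,\infty]$ uniformly.
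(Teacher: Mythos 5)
Your proof is correct and follows exactly the same route as the paper: apply Lemma~\ref{lem:shift} to the optimal matching $\match_{t_1}$ at $t_1$, then use the minimality of $\optcost(t_2)$ over all $k$-matchings. No gaps.
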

\begin{proof}
For the respective optimal $k$-matchings $\match_1$ and $\match_{2}$ between 
$A + t_1$ and $B$ and $A + t_2$ and $B$, 
\[ 
\optcost(t_2) =
\cost(\match_2, t_2) \leq 
\cost(\match_1, t_2) \leq 
\cost(\match_1, t_1) + \|t_2 - t_1\| 
= \optcost(t_1) + \|t_2 - t_1\|.
\qedhere
\]
\end{proof}

\subparagraph*{Approximating $t^*$ by point-to-point translations.}

As in  \cite{CGKR}, we
consider the set $T = \{ b - a \mid a \in A, b \in B\}$ of at most 
$mn$
\emph{point-to-point
translations} where some point in $A$ is moved to some point
in $B$.
The following simple observation turns out to be very useful:
\begin{lemma}[{\cite[Observation~1]{CGKR}}]
\label{lem:nn}
Let $t \in \reals^2$ be an arbitrary translation vector, and 
let $t_0\in T$ be the nearest neighbor of $t$ in $T$. Then
$\optcost(t) \ge \|t - t_0\|$.
\end{lemma}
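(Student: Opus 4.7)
The plan is to exploit the simple observation that the set $T$ contains exactly the translations that would perfectly align one point of $A$ with one point of $B$. More precisely, for any pair $(a,b) \in A \times B$, the element $b - a \in T$ satisfies $\|t - (b-a)\| = \|a + t - b\|$. So every pair in any $k$-matching supplies a witness in $T$ whose Euclidean distance from $t$ equals the length of that matching edge.

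First, I would fix an optimal $k$-matching $\match_t$ at translation $t$, so that $\optcost(t) = \cost(\match_t, t)$. Writing $\match_t = \{(a_1,b_1),\dots,(a_k,b_k)\}$, the witnesses $b_i - a_i$ all lie in $T$, so for each $i$,
\[
\|t - t_0\| \;\le\; \|t - (b_i - a_i)\| \;=\; \|a_i + t - b_i\|,
\]
and taking the minimum over $i$ gives $\|t - t_0\| \le \min_i \|a_i + t - b_i\|$.

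Next I would argue that $\optcost(t) = \cost(\match_t, t)$ is at least this minimum. For $p = \infty$ the bound $\cost(\match_t,t) \ge \min_i \|a_i + t - b_i\|$ is trivial, and for finite $p$ it follows because the normalized $L_p$-average $\bigl[\tfrac1k\sum_i x_i^p\bigr]^{1/p}$ of nonnegative numbers $x_i$ dominates $\min_i x_i$ (just replace every $x_i$ by this minimum). Combining the two inequalities yields $\optcost(t) \ge \|t - t_0\|$, as desired.

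There is no real obstacle here; the statement is essentially a one-line consequence of the $1/k$ normalization in the definition of $\cost$, together with the identification of matching edges with elements of $T$. The only minor care needed is to handle the $p = \infty$ case separately from finite $p$, and to use the normalization (without the factor $1/k$, an $L_p$-sum of $k$ equal terms would be a factor $k^{1/p}$ larger than the minimum, and the lemma would fail).
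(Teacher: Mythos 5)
Your proof is correct and follows essentially the same route as the paper: identify each matching edge $(a_i,b_i)$ with the translation $b_i-a_i\in T$, note that every edge length $\|a_i+t-b_i\|=\|t-(b_i-a_i)\|$ is at least $\|t-t_0\|$, and conclude via the $1/k$-normalized $L_p$-average (the paper phrases this as ``all summands are at least $\|t-t_0\|$''). Your explicit remark that the $1/k$ normalization is what makes the averaging step work is a nice touch but not a departure from the paper's argument.
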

\begin{proof}
By definition, $t_0 = b - a$ is the translation in $T$ with
$\| t - t_0 \| = \min_{(a', b') \in A \times B} \| t - b' + a' \|$.
Thus, for $p<\infty$, all summands in
the definition~\eqref{eq:cost}
of $\optcost(t)$ are at least $\| t - t_0 \|$, implying
$\optcost(t) \ge \|t - t_0\|$. The last conclusion is
trivially valid for $p=\infty$
as well.
\end{proof}

\begin{lemma}[{\cite[Lemma~1]{CGKR}}] \label{lem:apx2}
	There is a translation $t_0\in T$ with $\optcost(t_0)
        \leq 
        {2}\optcost(t^*)$.
\end{lemma}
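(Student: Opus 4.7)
The plan is to use Corollary~\ref{cor:shift} (Lipschitz continuity of $\optcost$) together with a carefully chosen point-to-point translation derived from the optimal matching $\match^* = \match_{t^*}$.

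First I would fix an optimal solution $(t^*, \match^*)$, write $\match^* = \{(a_1,b_1),\ldots,(a_k,b_k)\}$, and set $d_i := \|a_i + t^* - b_i\|$, so that $\optcost(t^*) = \bigl[\tfrac{1}{k}\sum_i d_i^p\bigr]^{1/p}$ for finite $p$, and $\optcost(t^*) = \max_i d_i$ for $p=\infty$. The key observation is that the minimum of the $d_i$ is always at most the $L_p$-average: if $d_j = \min_i d_i$, then $d_j^p \le \tfrac{1}{k}\sum_i d_i^p$, hence $d_j \le \optcost(t^*)$ (and trivially so for $p=\infty$).

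Next I would take $t_0 := b_j - a_j \in T$, i.e., the point-to-point translation that takes $a_j$ exactly to $b_j$. Then
\[
\|t^* - t_0\| = \|a_j + t^* - b_j\| = d_j \le \optcost(t^*).
\]
Applying Corollary~\ref{cor:shift} with $t_1 = t^*$ and $t_2 = t_0$ gives
\[
\optcost(t_0) \le \optcost(t^*) + \|t_0 - t^*\| \le 2\,\optcost(t^*),
\]
which is the claim.

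There is no real obstacle here; the only subtle point is justifying that the minimum edge length $d_j$ in the optimal matching is bounded by $\optcost(t^*)$, which is immediate from monotonicity of the $L_p$-mean (and trivial for $p=\infty$). Everything else is a direct application of the Lipschitz corollary already proved in the excerpt.
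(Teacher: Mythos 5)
Your proof is correct and follows essentially the same route as the paper: pick the shortest edge of the optimal matching, use the corresponding point-to-point translation $t_0=b_j-a_j\in T$, and apply Corollary~\ref{cor:shift}. The only cosmetic difference is that you justify $d_j\le\optcost(t^*)$ directly via monotonicity of the $L_p$-mean, while the paper cites Lemma~\ref{lem:nn} for the same bound.
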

\begin{proof}
Let $t^*$ be an optimal translation and $\match^*$ a corresponding
matching of size $k$. Take 
the translation 
$\Delta = b - a - t^*\in \reals^2$ for which $\|a + t^* - b\|$ is 
minimized, 
over $(a, b) \in \match^*$.
By Lemma~\ref{lem:nn},
$\|\Delta\| \le \optcost(t^*)$. 
The claim now follows from Lipschitz continuity 
(Corollary~\ref{cor:shift}) with
$t_1 = t^*$ and $t_2 = t^* + \Delta$, where 
the latter translation is the desired
$t_0 \in T$.
\end{proof}
We remark that for RMS matching ($p=2$), the factor~2 can
be improved to $\sqrt{2}$.
\begin{lemma}\label{lem:apxsqrt2}
	If we measure the cost under the $L_2$-norm, there is a 
	translation $t_0\in T$ with $\optcost(t_0)
        \leq \sqrt{2}\optcost(t^*)$.
\end{lemma}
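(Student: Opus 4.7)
The plan is to sharpen the proof of Lemma~\ref{lem:apx2} by exploiting the parallelogram-type identity available for the $L_2$-norm, combined with an averaging argument over candidate point-to-point translations.

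Let $t^*$ be an optimal translation and $\match^* = \{(a_1,b_1),\ldots,(a_k,b_k)\}$ a corresponding optimal $k$-matching. Set $v_i = a_i + t^* - b_i$, so that $\optcost(t^*)^2 = \frac{1}{k} \sum_{i=1}^k \|v_i\|^2$. For each index $j \in \{1,\ldots,k\}$, consider the point-to-point translation $t_j := b_j - a_j \in T$. First, I would observe that $t_j = t^* - v_j$, so for every $i$,
\[
a_i + t_j - b_i = (a_i + t^* - b_i) - v_j = v_i - v_j.
\]
Hence $\cost(\match^*, t_j)^2 = \frac{1}{k} \sum_{i=1}^k \|v_i - v_j\|^2$.

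The crucial step is an averaging argument over $j$: expanding $\|v_i - v_j\|^2 = \|v_i\|^2 - 2 v_i \cdot v_j + \|v_j\|^2$ and summing, I would obtain
\[
\frac{1}{k} \sum_{j=1}^k \sum_{i=1}^k \|v_i - v_j\|^2
= 2 \sum_{i=1}^k \|v_i\|^2 - \frac{2}{k} \Bigl\| \sum_{i=1}^k v_i \Bigr\|^2
\leq 2 \sum_{i=1}^k \|v_i\|^2.
\]
Therefore the average over $j$ of $\cost(\match^*, t_j)^2$ is at most $2\,\optcost(t^*)^2$, so there must exist some $j^*$ with $\cost(\match^*, t_{j^*})^2 \leq 2\,\optcost(t^*)^2$. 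Setting $t_0 = t_{j^*} \in T$ then yields $\optcost(t_0) \leq \cost(\match^*, t_0) \leq \sqrt{2}\,\optcost(t^*)$, as required.

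I do not expect any real obstacle: the identity relating $a_i + t_j - b_i$ to $v_i - v_j$ is an immediate calculation, and the averaging identity exploits exactly the feature of the squared Euclidean norm (bilinearity of the inner product) that was unavailable in the general $L_p$ argument of Lemma~\ref{lem:apx2}. The only small subtlety is making sure that choosing $t_0$ to be the translation $t_j$ that zeros out one summand in $\match^*$ is precisely a point-to-point translation, which follows directly from $t_j = b_j - a_j$.
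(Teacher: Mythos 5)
Your proof is correct, and it takes a genuinely different route from the paper's. The paper first derives the exact second-order expansion $\cost(\match^*, t^*+\Delta) = \bigl(\cost(\match^*,t^*)^2 + \tfrac{2}{k}\sum_i \vec v_i\cdot\Delta + \|\Delta\|^2\bigr)^{1/2}$, then uses the \emph{optimality of $t^*$} to argue that the linear term vanishes ($\sum_i \vec v_i = \vec 0$), so that $\cost(\match^*, t^*+\Delta) = \sqrt{\optcost(t^*)^2+\|\Delta\|^2}$ for every $\Delta$, and finally invokes Lemma~\ref{lem:nn} to bound $\|\Delta\|\le\optcost(t^*)$ for the nearest point-to-point translation. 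You instead average $\cost(\match^*,t_j)^2=\tfrac1k\sum_i\|v_i-v_j\|^2$ over the $k$ candidate translations $t_j=b_j-a_j$ induced by the matched pairs; the cross terms collapse to $-\tfrac{2}{k}\|\sum_i v_i\|^2\le 0$, so some $t_{j^*}$ beats the average $2\optcost(t^*)^2$. Your argument is more elementary and strictly more general: it needs neither the stationarity of $t^*$ nor Lemma~\ref{lem:nn}, and it actually shows that for \emph{any} translation $t$ and any matching $\match$ there is a pair-induced translation $t_j$ with $\cost(\match,t_j)\le\sqrt2\,\cost(\match,t)$. What the paper's route buys in exchange is the structural identity $\cost(\match^*,t^*+\Delta)=\sqrt{\optcost(t^*)^2+\|\Delta\|^2}$, which is of independent interest. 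One cosmetic remark: your closing "subtlety" about choosing the $t_j$ that "zeros out one summand" is a leftover from the Lemma~\ref{lem:apx2} mindset and plays no role in your averaging argument --- all you need is that each $t_j=b_j-a_j$ lies in $T$, which is immediate from the definition of $T$.
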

\begin{proof}
We need a refined version of 
Lemma~\ref{lem:shift} for the $L_2$-norm.
For this, let $\match = \{(a_1, b_1), \dots, (a_k, b_k)\}$ be a matching 
of size $k$, and let $t, \Delta \in \reals^2$ be two translation vectors.
Define two sequences of $k$ two-dimensional vectors by
$\vec{v}_i = a_i + t - b_i$ and 
$\vec{w}_i =  a_i + t + \Delta - b_i $,
for $1\le i \le k$. Since the 
Euclidean norm is derived from a scalar product, we have, for each $i$,
\[
  \vec{w}_i^2  =  (a_i + t + \Delta - b_i )^2 = ( a_i + t - b_i )^2 + 
  2 (a_i + t - b_i) \cdot \Delta + \Delta^2 = 
  \vec{v}_i^2 +2 \vec{v}_i \cdot \Delta +  \Delta^2.
\]
Now, under the Euclidean norm, this gives
\begin{align}
\cost(\match, t + \Delta) = 
\sqrt{\frac{1}{k} \sum_{i = 1}^{k} \vec{w}_i^2}
&= \sqrt{\frac{1}{k}\sum_{i = 1}^{k} \vec{v}_i^2 +
\frac{2}{k}\sum_{i = 1}^k \vec{v}_i \cdot \Delta + \Delta^2}\notag \\
&= \sqrt{\cost(\match, t)^2 +
\frac{2}{k}\sum_{i = 1}^k \vec{v}_i \cdot \Delta + \Delta^2}. 
\label{equ:costequ}
\end{align}

Let now $t^*$ be an optimal translation and 
$\match^* = \{(a_1, b_1), \dots, (a_k, b_k)\}$ a corresponding
matching of size $k$. If $\optcost(t^*) = 0$, then we have
$t^* \in T$, and the lemma follows. Thus, assume
$\optcost(t^*) > 0$. Let 
$\vec{v}_i = a_i + t^* - b_i$ be the translated
points, for $1 \leq i \leq k$. 

Consider the vector $\gamma = \sum_{i = 1}^k \vec{v}_i$.
We claim that $\gamma = \vec{0}$.
Otherwise, by taking $\Delta = - \eps \gamma$,
for $\eps > 0$, we would get by (\ref{equ:costequ})
\[
\cost(\match^*, t^* + \Delta) = 
\sqrt{\optcost(t^*)^2 -
\frac{2}{k} \eps \gamma^2  + \eps^2\gamma^2} = 
\sqrt{\optcost(t^*)^2 
-\Big(\frac{2}{k} - \eps\Big)\eps \gamma^2},
\]
and for small enough $\eps > 0$, the translation
vector $t^* + \Delta$ would be strictly better than $t^*$,
a contradiction. Hence, for every $\Delta \in \reals^2$, we have
by (\ref{equ:costequ})
\[
\cost(\match^*, t^* + \Delta) = 
\sqrt{\optcost(t^*)^2 
+ \Delta^2}. 
\]
Now consider 
the translation 
$\Delta = b - a - t^*\in \reals^2$ for which $\|a + t^* - b\|$ is 
minimized, 
over $(a, b) \in \match^*$.
By Lemma~\ref{lem:nn},
$\|\Delta\| \le \optcost(t^*)$.  Thus, we get
\[
\cost(\match^*, t^* + \Delta) \leq 
\sqrt2{\optcost(t^2)^2}.
\]
Since $t^* + \Delta \in T$, the lemma follows.
\end{proof}

Lemma~\ref{lem:apx2} leads to the following simple algorithm for 
approximating the optimum matching. Compute $T$, and iterate over 
its elements. For each $t_0 \in T$ compute $\optcost(t_0)$ (exactly),
and return the matching with the minimum weight, 
in $O(mnW(m, n, k))$ time.

If we are willing to tolerate a slightly larger approximation factor, 
we can compute, for any $\delta > 0$ and for each $t_0 \in T$, a 
$(1 + \delta)$-approximate matching.
This approach has overall running time  $O(mnW(m, n, k, \delta))$.

\begin{theorem}
\label{const-factor-delta}
Let $A, B \subset \reals^2$, with $|A| = m$ and $|B| = n$, $m \leq n$,
and let $k \leq m$ be a size parameter. 
A translation vector 
$\tilde t \in \reals^2$ can be computed in
$O(mnW(m, n, k))$
time, such that 
$\optcost(\tilde t) \leq 2\optcost(t^*)$, where $t^*$ is the optimum 
translation.
Alternatively, for any constant $\delta > 0$, one can compute 
a translation vector $\tilde t \in \reals^2$ and a $k$-matching 
$\approxmatch$ between $A$ and $B$,
in
$O(mnW(m,n,k,\delta))$
time, such that 
$\cost(\approxmatch, \tilde t) \leq 2(1 + \delta)\optcost(t^*)$.
For the case of the Euclidean norm, this can be improved to
$\optcost(\tilde t) \leq \sqrt{2}\optcost(t^*)$ and
$\cost(\approxmatch, \tilde t) \leq \sqrt{2}(1 + \delta)\optcost(t^*)$,
respectively.
\end{theorem}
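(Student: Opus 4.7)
The plan is to turn the structural result of Lemma~\ref{lem:apx2} (respectively, Lemma~\ref{lem:apxsqrt2} in the Euclidean case) into an algorithm by brute-force search over the set $T=\{b-a\mid a\in A,\ b\in B\}$ of point-to-point translations, of which there are at most $mn$. First, I would explicitly compute $T$ in $O(mn)$ time. Then, for every $t_0\in T$, I would shift $A$ by $t_0$ and invoke the stationary-matching black box of Table~\ref{tab:matching}: in the exact variant this costs $W(m,n,k)$ per translation and returns $\optcost(t_0)$; in the approximate variant it costs $W(m,n,k,\delta)$ and returns a $k$-matching $\approxmatch_{t_0}$ with $\cost(\approxmatch_{t_0},t_0)\le(1+\delta)\optcost(t_0)$. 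Let $\tilde t\in T$ be the translation minimizing the value returned by the black box, and, in the approximate case, let $\approxmatch=\approxmatch_{\tilde t}$.

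The correctness step is then immediate from Lemma~\ref{lem:apx2}: there exists some $t_0^{\star}\in T$ with $\optcost(t_0^{\star})\le 2\,\optcost(t^*)$. In the exact variant, $\optcost(\tilde t)\le \optcost(t_0^{\star})\le 2\,\optcost(t^*)$. In the approximate variant,
\[
\cost(\approxmatch,\tilde t)\ \le\ \cost(\approxmatch_{t_0^{\star}},t_0^{\star})\ \le\ (1+\delta)\,\optcost(t_0^{\star})\ \le\ 2(1+\delta)\,\optcost(t^*).
\]
The two total running times claimed in the theorem follow by multiplying the per-translation cost by $|T|\le mn$ and noting that the $O(mn)$ for constructing $T$ is absorbed. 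For the Euclidean case I would repeat the argument verbatim but invoke Lemma~\ref{lem:apxsqrt2} in place of Lemma~\ref{lem:apx2}, replacing the constant $2$ by $\sqrt{2}$ throughout.

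There is no real obstacle here: the theorem is essentially a packaging statement that combines the existence lemma for a good $t_0\in T$ with the stationary matching oracle. The only mildly delicate point is confirming that, in the approximate variant, comparing the \emph{approximate} costs returned by the oracle (rather than the true optima) still yields the stated guarantee; this works because the approximation at the witness translation $t_0^{\star}$ provides an upper bound on the value the algorithm ends up selecting, and this upper bound already accounts for the $(1+\delta)$ slack, so no further loss is incurred in the minimization step.
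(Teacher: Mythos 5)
Your proposal is correct and is essentially the paper's own argument: the paper proves this theorem by exactly the same brute-force enumeration over $T$, invoking the stationary-matching oracle (exact or $(1+\delta)$-approximate) at each $t_0\in T$ and appealing to Lemma~\ref{lem:apx2} (respectively Lemma~\ref{lem:apxsqrt2} for the Euclidean case) for the existence of a good witness translation. Your extra remark about why minimizing over the oracle's approximate values still gives the stated bound is a correct and worthwhile clarification of a point the paper leaves implicit.
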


\subsection{An Approximate Matching Diagram}
\label{approximate-diagram}
We construct a planar subdivision $\approxmap$ that approximates
the matching diagram $\M$ up to factor $3$. This means
that, for each face $\sigma$ of $\approxmap$, there is a single matching
$\match_\sigma$ (that we provide) so that, for each $t \in \sigma$, we have
$
  \optcost(t) \le \cost(\match_\sigma,t) \le 3\optcost(t)$.

We need a lemma that relates the best matching for a given
translation $t$ to the closest translation in $T$.
\begin{lemma} \label{lem:apx4}
Let $t$ be an arbitrary translation, and let $t_0 \in T$ be its 
nearest neighbor in $T$, i.e., the translation
in $T$ that minimizes the length of $\Delta = t_0 - t$. Then, 
\begin{equation} \label{eq:ulb}
  \optcost(t) \le \cost(\match_{t_0}, t) \le 3\optcost(t) .
\end{equation}
\end{lemma}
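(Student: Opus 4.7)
The left inequality is immediate: $\optcost(t)$ is by definition the minimum of $\cost(\match,t)$ over all $k$-matchings $\match$, so in particular $\optcost(t) \le \cost(\match_{t_0}, t)$.

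For the right inequality, the plan is to bound $\cost(\match_{t_0}, t)$ by shifting it from $t$ to $t_0$, where the matching $\match_{t_0}$ is optimal. I would first apply the Lipschitz bound from Lemma~\ref{lem:shift} to the matching $\match_{t_0}$ with displacement $-\Delta$, obtaining
\[
\cost(\match_{t_0}, t) \;\le\; \cost(\match_{t_0}, t_0) + \|\Delta\| \;=\; \optcost(t_0) + \|\Delta\|.
\]
Then I would apply Corollary~\ref{cor:shift} to relate $\optcost(t_0)$ to $\optcost(t)$: $\optcost(t_0) \le \optcost(t) + \|\Delta\|$. Combining these two yields $\cost(\match_{t_0}, t) \le \optcost(t) + 2\|\Delta\|$.

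Finally, since $t_0$ is the nearest neighbor of $t$ in $T$, Lemma~\ref{lem:nn} gives $\|\Delta\| = \|t - t_0\| \le \optcost(t)$, and substituting this in produces the desired bound $\cost(\match_{t_0}, t) \le 3\optcost(t)$.

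There is no real obstacle here; the proof is essentially a three-step application of already-established tools (Lipschitz continuity of a fixed matching, Lipschitz continuity of $\optcost$, and the nearest-neighbor lower bound on $\optcost$). The only thing to watch is the direction of each inequality and the fact that the Lipschitz bound on $\optcost$ is symmetric in $t$ and $t_0$, so we can freely use $\optcost(t_0) \le \optcost(t) + \|\Delta\|$ even though it is $\optcost(t_0)$ that we want to control.
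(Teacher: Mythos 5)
Your proof is correct and is essentially identical to the paper's: both establish the left inequality by the definition of $\optcost$, and the right inequality by chaining Lemma~\ref{lem:shift} applied to $\match_{t_0}$, Corollary~\ref{cor:shift}, and the nearest-neighbor bound $\|\Delta\|\le\optcost(t)$ from Lemma~\ref{lem:nn}.
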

(Recall that $\match_{t_0}$ denotes the optimal matching for $t_0$.)
\begin{proof}
Since $\match_{t_0}$ is a $k$-matching between $A$ and $B$, 
we have, by definition, $\optcost(t) \le \cost(\match_{t_0}, t)$. 
We prove the second inequality.
By Corollary~\ref{cor:shift},
$ 
  \optcost(t_0) \le \optcost(t) + \|\Delta\|$,
and by Lemma~\ref{lem:nn}, $\|\Delta\|\le \optcost(t)$.
Applying Lemma~\ref{lem:shift}, we obtain
\begin{align*}
\cost(\match_{t_0}, t) &\leq \cost(\match_{t_0}, t_0) + \|t - t_0\| = 
\optcost(t_0) + \|\Delta\| \\
&\le
 \optcost(t) + 2\|\Delta\| \le 
\optcost(t) + 2\optcost(t) = 3\optcost(t) .\qedhere
\end{align*}
\end{proof}

Our approximate map
$\approxmap$
is simply the Voronoi diagram $\VD(T)$,
where each cell $\VC(t_0)$, for $t_0\in T$, is associated 
with the optimal
matching $M_{t_0}$ at $t_0$.
 Correctness follows
immediately from Lemma~\ref{lem:apx4}. 
Since the complexity of $\VD(T)$ is $O(|T|)=O(mn)$, we have a 
diagram of complexity $O(mn)$. For each point $t_0 \in T$, we can 
either directly compute an optimal $k$-matching between $A + t_0$ and $B$ 
and associate the resulting map with $\VC(t_0)$, or use the 
$(1+\delta)$-approximation algorithm of \cite{SA12}. 
In the former case, $\VD(T)$ is a $3$-approximate 
matching diagram, and in the latter case it is a $3(1+\delta)$-approximate matching diagram.
We thus conclude the following:
\begin{theorem}
Let $A, B \subset \reals^2$, with $|A| = m$ and $|B| = n$, $m \leq n$, 
and let $k \leq m$ be a size parameter.
There is a $3$-approximate $k$-matching diagram of $A$ and $B$ of 
size $O(mn)$, and it (and the matchings in each cell) can be 
computed in $O(mnW(m,n,k))$ time. 
Alternatively, a $3(1+\delta)$-approximate 
matching diagram, for constant $\delta > 0$, of size $O(mn)$ can 
be computed, using the same planar decomposition, 
in $O(mnW(m,n,k,\delta))$ time.
\end{theorem}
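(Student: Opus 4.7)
My plan is to define $\approxmap$ as the Voronoi diagram $\VD(T)$ of the point-to-point translation set $T = \{b - a \mid a \in A, b \in B\}$, and to attach to every Voronoi cell $\VC(t_0)$ the optimal $k$-matching $\match_{t_0}$ between $A + t_0$ and $B$. Correctness is essentially a restatement of Lemma~\ref{lem:apx4}: for any $t \in \VC(t_0)$, the site $t_0$ is by definition the nearest neighbor of $t$ in $T$, so $\optcost(t) \le \cost(\match_{t_0}, t) \le 3\optcost(t)$ exactly as required by the definition of a $3$-approximate matching diagram.

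The size bound is immediate from $|T| \le mn$: the planar Voronoi diagram of $O(mn)$ sites has combinatorial complexity $O(mn)$. For the running time in the exact version, I would first enumerate $T$ in $O(mn)$ time, build $\VD(T)$ in $O(mn \log mn)$ time, and then iterate through the $O(mn)$ sites $t_0$, computing $\match_{t_0}$ in $W(m, n, k)$ time each; this comes to a total of $O(mn W(m,n,k))$, since the Voronoi construction is dominated by the matching cost.

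For the $3(1+\delta)$-approximate variant, the natural modification is to replace each exact $\match_{t_0}$ by a $(1+\delta)$-approximate matching $\tilde\match_{t_0}$, computed in $W(m,n,k,\delta)$ time per site, for an overall $O(mn\, W(m,n,k,\delta))$. The one substantive thing to verify is that the approximation ratio still collapses to $3(1+\delta)$. I would re-run the proof of Lemma~\ref{lem:apx4} with $\cost(\tilde\match_{t_0}, t_0) \le (1+\delta)\optcost(t_0)$ in place of the equality at $t_0$: applying Lemma~\ref{lem:shift}, then Corollary~\ref{cor:shift}, and finally Lemma~\ref{lem:nn}, one gets
\[
\cost(\tilde\match_{t_0}, t) \le (1+\delta)\optcost(t_0) + \|\Delta\| \le (1+\delta)\optcost(t) + (2+\delta)\|\Delta\| \le (3+2\delta)\optcost(t),
\]
and $3 + 2\delta \le 3(1+\delta)$ finishes the job. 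This short chain of inequalities is really the only non-bookkeeping step; everything else is immediate from the constructions and data structures already in place.
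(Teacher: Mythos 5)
Your proposal is correct and follows exactly the paper's construction: the Voronoi diagram $\VD(T)$ with $\match_{t_0}$ attached to each cell, correctness from Lemma~\ref{lem:apx4}, and size/time bounds from $|T|\le mn$. Your explicit re-derivation showing the approximate variant yields ratio $3+2\delta\le 3(1+\delta)$ is a detail the paper leaves implicit, and it checks out.
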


For $p = 2$, there is an alternative, potentially better approximating, construction. For 
each $t \in T$, define the function 
$f_t (s) := \cost(\match_{t}, s)$, and 
set $F = \{ f_t \mid t \in T\}$. We let $\approxmap_0$ be the 
minimization diagram of the functions in $F$. 
Simple algebraic manipulations, similar to those for Euclidean 
Voronoi diagrams, show that $\approxmap_0$ is the minimization diagram 
of a set of $|T| \leq mn$ linear functions, namely, the functions 
$\tilde{f}_t (s) = 2\sum_{(a, b) \in \match_t} \langle a - b, s\rangle 
+ \sum_{(a, b) \in \match_t} \|a - b\|^2$, 
for $t \in T$. The resulting map $\approxmap_0$ is a 
$3$-approximate diagram of complexity $O(mn)$.
To see this, consider a Voronoi cell $\VC(t_0)$ in $\approxmap$.
We divide it into subcells in $\approxmap_0$,
each associated with some matching. All these matchings, other than 
$\match_{t_0}$, have 
smaller weights than the matching computed for $t_0$, over their 
respective subcells. Note that this subdivision is only used
for the analysis, the algorithm outputs the
original minimization diagram.
We emphasize that this construction works 
only for $p = 2$, while the Voronoi diagram applies for any 
$p \in [1, \infty]$.

For $p = 2$, using the fact that 
the Euclidean norm is derived from a scalar product, we can
improve the constant factors in Lemma~\ref{lem:apx2} and
Lemma~\ref{lem:apx4}. However, we chose to present the more
general results, since they are simpler and since we derive
a more powerful approximation below anyway.

\section{Improved Approximation Algorithms}
\label{sec:e-approx}

\subparagraph*{Computing a $(1+\eps)$-approximation of the optimum 
matching.}

This algorithm uses the same technique that was used by Cabello \etal
\cite[Section 4.1, Theorem 6]{CGKR} in a slightly different setting.
We include the description of this algorithm as a preparation for the
approximate minimization diagram, and for the improved solutions in 
the following section.

Let $t^*$ be the optimum translation, as above. Our goal is to compute 
a translation $\tilde{t}$ and a matching $\approxmatch$ so that 
$\cost(\approxmatch, \tilde t) \leq (1+\eps)\optcost(t^*)$.

Suppose we know the translation $t_0 \in T$ that minimizes the length 
of $\Delta = t_0-t^*$. By
 Lemma~\ref{lem:nn}
and Lipschitz continuity (Corollary~\ref{cor:shift}),
$\|\Delta\|
\le \optcost(t^*) \le
\optcost(t_0) \le \optcost(t^*)+\|\Delta\|
\le 2\optcost(t^*)$. 
Using
Theorem~\ref{const-factor-delta} with $\delta=1/2$,
we compute a 3-approximation
for $\optcost(t^*)$, 
This allows us to choose some radius $r_0$
with
$2\optcost(t^*)\le r_0 \le 6\optcost(t^*)$.
We take the disk $D_0$ of radius $r_0$ centered at $t_0$, and 
we tile it with the vertices of a square grid of side-length 
$\delta := \frac{\eps\sqrt{2}}{18}r_0 \leq 
\frac{\eps\sqrt{2}}3\optcost(t^*)$.
 We define $G_0$ as the set of  
 vertices of all grid cells that lie in $D_0$ or
 that overlap 
$D_0$ at least partially.
$G_0$~contains $O(r_0/\delta)^2=O(1/\eps^2)$ vertices.

We compute, by~\cite{SA12}, a $(1+\eps/2)$-approximate 
minimum-weight matching at each 
translation in $G_0$
 and return the one that achieves the 
smallest weight. Since $t^*$ has distance at most 
$\delta/\sqrt{2}$ from some grid vertex $g \in G_0$,
we have, again by
 Lipschitz continuity (Corollary~\ref{cor:shift}),
\[
\optcost(g) \le \optcost(t^*) + \frac{\delta}{\sqrt{2}} 
\le \optcost(t^*)+\frac{\eps}3\optcost(t^*) \le 
\left (1+\frac{\eps}{3}\right)\optcost(t^*).
\]
Since we compute a $(1+\eps/2)$-approximate matching for each grid 
point, the best computed matching has cost at most 
$(1+\eps/3)(1+\eps/2)\optcost(t^*)\le(1+\eps)\optcost(t^*)$,
assuming $\eps\le 1$.

Since we do not know $t_0$, we apply this procedure to all 
$mn$ translations of $T$,
for a total of
$
O( mn/\eps^2)
$ approximate matching calculations for fixed sets.

\begin{theorem}
Let $A, B \subseteq \reals^2$, $|A| = m\le |B| = n$, and
let $k\le m$ be a size parameter and $0<\eps\le 1$ a constant. 
A translation vector $\tilde t\in\reals^2$ 
and a matching $\approxmatch$ of size $k$ between $A$ and $B$ can be
computed 
in $O\big(\frac{mn}{\eps^2} \cdot W(m,n,k,\frac{\eps}{2})\big)$ time, such that 
$\cost(\approxmatch, \tilde t) \le (1+\eps)\optcost(t^*)$. 
\end{theorem}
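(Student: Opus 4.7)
The plan is to combine the point-to-point translations from Theorem~\ref{const-factor-delta} with a local $\eps$-grid refinement. By Lemma~\ref{lem:nn}, the translation $t_0\in T$ nearest to $t^*$ satisfies $\|t_0-t^*\|\le\optcost(t^*)$, so $t^*$ lies in a disk of radius $\optcost(t^*)$ around this particular $t_0$. If we tile this disk with a sufficiently fine square grid and evaluate the matching cost at each grid vertex, Corollary~\ref{cor:shift} guarantees that the vertex nearest to $t^*$ has near-optimal cost. Since we do not know which $t_0\in T$ is the nearest one, we simply enumerate all $mn$ candidates and return the cheapest matching found.

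Concretely, I would first invoke Theorem~\ref{const-factor-delta} (with $\delta=1/2$) once, in $O(mn\cdot W(m,n,k,1/2))$ time, to obtain a value $r_0$ sandwiched between $2\optcost(t^*)$ and $6\optcost(t^*)$. Then, for every $t_0\in T$, I place a square grid of side length $\delta_{\mathrm{grid}}=\tfrac{\sqrt{2}\,\eps}{18}r_0$ on the disk $D_0$ of radius $r_0$ centered at $t_0$; an area count yields $O(1/\eps^2)$ grid vertices per disk. At each vertex $g$, I run the static $(1+\eps/2)$-approximate $k$-matching oracle in $W(m,n,k,\eps/2)$ time and keep track of the best pair $(g,\approxmatch_g)$ encountered over all $t_0$ and $g$.

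For the correctness analysis I only need to examine the iteration in which $t_0$ is the point of $T$ nearest to $t^*$. In that iteration $t^*\in D_0$ (since $\optcost(t^*)\le r_0/2$), and the grid vertex $g$ closest to $t^*$ lies within $\delta_{\mathrm{grid}}/\sqrt{2}\le\tfrac{\eps}{3}\optcost(t^*)$ of $t^*$. Corollary~\ref{cor:shift} then gives $\optcost(g)\le(1+\eps/3)\optcost(t^*)$, and the approximate matching computed at $g$ costs at most $(1+\eps/2)(1+\eps/3)\optcost(t^*)\le(1+\eps)\optcost(t^*)$ for $\eps\le 1$. The running time is dominated by $mn\cdot O(1/\eps^2)$ calls to the static oracle, matching the claimed bound.

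The main obstacle I anticipate is the circular dependency between the grid spacing and $\optcost(t^*)$: without an estimate of $\optcost(t^*)$ we cannot choose a resolution that is simultaneously fine enough for accuracy and coarse enough to keep the grid size $O(1/\eps^2)$. This is what forces the preliminary constant-factor approximation, whose cost is absorbed into the main term. The only other subtle point is that we have no guarantee at all on the $mn-1$ ``wrong'' iterations; correctness relies solely on the single good iteration, and taking a global minimum over all computed matchings automatically does no worse than what that good iteration produces.
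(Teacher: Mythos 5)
Your proposal is correct and follows essentially the same route as the paper: a preliminary constant-factor approximation via Theorem~\ref{const-factor-delta} with $\delta=1/2$ to fix the grid scale $r_0$, a grid of side $\tfrac{\sqrt2\,\eps}{18}r_0$ on a disk of radius $r_0$ around every $t_0\in T$, and a $(1+\eps/2)$-approximate static matching at each of the $O(mn/\eps^2)$ grid vertices, with correctness argued only for the iteration where $t_0$ is nearest to $t^*$. The constants and the final bound $(1+\eps/3)(1+\eps/2)\le 1+\eps$ match the paper's argument exactly.
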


Cabello~\etal~\cite[Theorem 4]{CGKR} give an 
$O\big(\frac{n^3m}{\eps^4}\log^2 n\big)$-time algorithm for
  the weighted problem, which includes the matching problem with
  $k=m\le n$ as a special case. It follows the same technique: it solves $O(mn/\eps^2)$ problems, each with
a fixed translation, but each such problem takes longer than in our
case because it uses the earth mover's distance.

\subparagraph*{A $(1+\eps)$-approximation of $\M$.}
We now construct a $(1+\eps)$-approximate matching diagram
$\approxmap$ of $A$ and $B$ by refining $\VD(T)$.
Without loss of generality, we assume that $\eps = 2^{-\alpha}$, for some
natural number $\alpha$, and we set
 $u := \log_2 (1/\eps) + 2 = \alpha + 2$.
We subdivide each Voronoi cell of $\VD(T)$ into smaller subcells, 
as follows. Fix $t_0 \in T$. 
For $i = 0, \dots, u$, 
let $B_i$ be the square of side-length $2^{i}\optcost(t_0)$, 
centered at $t_0$. 
Set
$B_{-1}=\emptyset$. For $i = 0, \dots, u$,
we partition $B_i\setminus B_{i-1}$ into a uniform grid
with side-length $\eps 2^{i-3}\optcost(t_0)$.
We clip each grid cell $\tau$ to $\VC(t_0)$, i.e., if 
$\tau \cap \VC(t_0) \neq \emptyset$, 
we take $\tau \cap \VC(t_0)$ as a face of $\approxmap$. 
Let $t_\tau$ be the center of the grid cell $\tau$. 
We associate $\match_\tau := \match_{t_\tau}$ with the face 
$\tau \cap \VC(t_0)$. Finally, each connected component of 
$\VC(t_0)\setminus B_u$ becomes a (possibly non-convex) face of 
$\approxmap$. There are at most four such faces, and we associate
$\match_{t_0}$ with each of them. 

\begin{figure}[ht]
  \centering
  \includegraphics[scale=0.7]{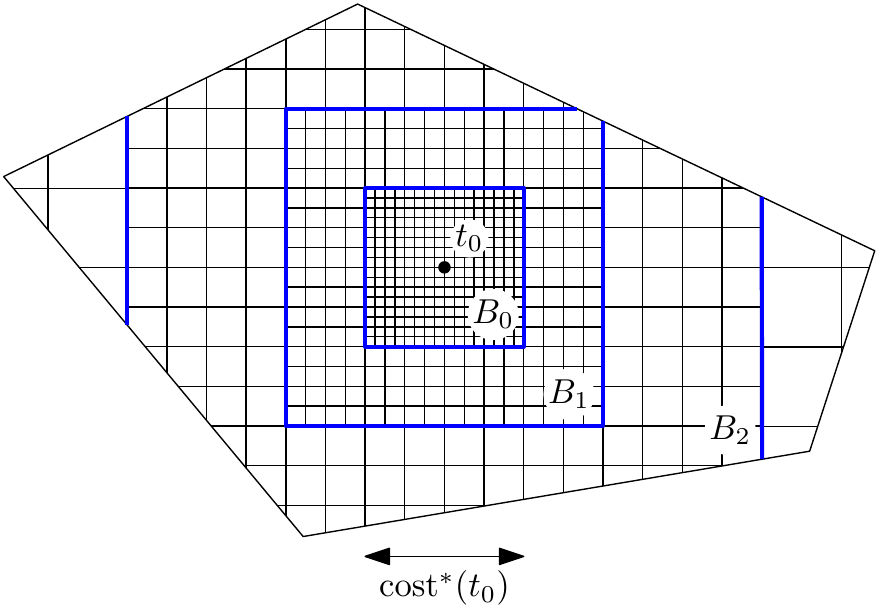}
  \caption{Partition of a Voronoi cell into nested grids, for the (unrealistically
    large) choice $\eps = 1/2$}
 \label{fig:grid}
\end{figure}

The above procedure partitions $\VC(t_0)$ into 
$O(\tfrac{1}{\eps^2}\log \tfrac1\eps)$ cells,
and their total complexity is 
$O(k_0 + \tfrac{1}{\eps^2}\log\tfrac{1}{\eps})$, where $k_0$ is the 
number of vertices on the boundary of $\VC(t_0)$. 
We repeat our procedure for all Voronoi cells of $\VD(T)$. 
Since the total complexity of $\VD(T)$ is $O(mn)$, the total complexity 
of $\approxmap$ is 
$O(\tfrac{mn}{\eps^2}\log\tfrac1\eps)$. 
\begin{lemma} \label{approxmap}
$\approxmap$ is a $(1+\eps)$-approximate matching diagram of $A$ and $B$.
\end{lemma}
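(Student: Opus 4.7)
The plan is to verify the approximation inequality $\cost(\match_\sigma, t) \le (1+\eps)\optcost(t)$ separately for each of the two types of face $\sigma$ of $\approxmap$, using only Lipschitz continuity (Lemma~\ref{lem:shift} and Corollary~\ref{cor:shift}) together with the key fact that, because $\sigma \subseteq \VC(t_0)$, the translation $t_0$ is a nearest neighbor of $t$ in $T$, so Lemma~\ref{lem:nn} yields $\|t - t_0\| \le \optcost(t)$. Throughout, I will write $h := \optcost(t_0)$ for brevity.

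For a face $\sigma$ of the first type, i.e., a grid cell $\tau$ in the annulus $B_i \setminus B_{i-1}$, clipped to $\VC(t_0)$ and associated with $\match_{t_\tau}$, I combine Lemmas~\ref{lem:shift} and~\ref{cor:shift} to bound
\[
  \cost(\match_{t_\tau}, t) \le \optcost(t_\tau) + \|t - t_\tau\| \le \optcost(t) + 2\|t - t_\tau\|.
\]
Since $t_\tau$ is the center of a grid cell of side length $\eps 2^{i-3} h$, we have $\|t - t_\tau\| \le (\sqrt 2/16)\,\eps\, 2^i h$. It remains to show $2\|t-t_\tau\| \le \eps \optcost(t)$, which reduces to bounding $\optcost(t)$ from below. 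For $i \ge 1$, the condition $t \notin B_{i-1}$ gives $\|t-t_0\| \ge \|t-t_0\|_\infty \ge 2^{i-2} h$, so Lemma~\ref{lem:nn} yields $\optcost(t) \ge 2^{i-2} h$, and the required inequality follows with room to spare. For $i = 0$, I instead combine Corollary~\ref{cor:shift} with Lemma~\ref{lem:nn} to get $h \le 2\optcost(t)$, which again gives the bound.

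For a face $\sigma$ of the second type, namely a connected component of $\VC(t_0) \setminus B_u$ associated with $\match_{t_0}$, the point $t$ lies outside $B_u$, so $\|t - t_0\| \ge 2^{u-1} h = (2/\eps) h$, i.e., $h \le (\eps/2)\|t-t_0\|$. Combining this with Lemma~\ref{lem:nn} gives $h \le (\eps/2)\optcost(t)$. Lipschitz continuity then yields
\[
  \cost(\match_{t_0}, t) \le h + \|t-t_0\| \le (\eps/2)\optcost(t) + \optcost(t) = (1+\eps/2)\optcost(t),
\]
which is even stronger than needed.

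The main subtlety lies in the first case: one has to calibrate the annular grid so that the diameter of a cell in annulus $B_i \setminus B_{i-1}$ is comparable to $\eps \cdot 2^i h$, while simultaneously the distance of any $t$ in that annulus from $t_0$ forces $\optcost(t)$ to be at least $\Omega(2^i h)$. The constants in the construction ($2^i h$ for the annulus radii, $\eps 2^{i-3} h$ for the grid size) are chosen precisely so that these two estimates match; every other step is a direct application of Lipschitz continuity.
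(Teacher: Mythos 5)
Your proof is correct and follows essentially the same route as the paper: the same case split into the innermost square $B_0$, the annuli $B_i\setminus B_{i-1}$, and the exterior of $B_u$, each handled by combining Lemma~\ref{lem:shift}, Corollary~\ref{cor:shift}, and Lemma~\ref{lem:nn} (the latter applicable because $t\in\VC(t_0)$). The only (harmless) deviation is in the $i=0$ case, where you bound $\optcost(t_0)\le 2\optcost(t)$ via Lemma~\ref{lem:nn} instead of the paper's geometric estimate $\optcost(t)\ge(1-1/\sqrt2)\optcost(t_0)$; both give the required inequality.
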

\begin{proof}
Let $t \in \reals^2$ be an arbitrary translation vector,  and
let $t_0 \in T$ be the nearest neighbor of $t$ in $T$, i.e.,
$t \in \VC(t_0)$.
First, consider the case when $t\not\in B_u$. Then 
$\|t - t_0\| \ge {2\optcost(t_0)}/{\eps}$ and $\match_{t_0}$ 
is the matching associated with the 
cell of $\approxmap$ containing $t$.  Hence, using 
Lemmas~\ref{lem:shift} and~\ref{lem:nn}, we obtain
\[
\optcost(t) \le \cost(\match_{t_0}, t) \le \optcost(t_0) + \|t-t_0\| \le 
\left(1+\frac{\eps}{2}\right)\|t-t_0\| \le 
	\left (1+\frac{\eps}{2}\right)\optcost(t).
\]
Suppose $t \in B_0$. Then  
$\|t - t_0\| \le \optcost(t_0)/\sqrt{2}$. Therefore,
by Corollary~\ref{cor:shift},
\[
\optcost(t) \ge \optcost(t_0) - \|t - t_0\| \ge 
\optcost(t_0) - \frac{1}{\sqrt{2}}\optcost(t_0) = 
\left(1-\frac{1}{\sqrt{2}}\right)\optcost(t_0).
\]
Let $\tau$ be the grid cell inside $B_0$ containing $t$, and 
let $t_\tau$ be the center of $\tau$. 
Then $\|t - t_\tau\| \le \tfrac{\eps}{8\sqrt{2}}\optcost(t_0)$.  
By Corollary~\ref{cor:shift},
$ 
\optcost(t_\tau) \le \optcost(t)+\|t-t_\tau\|$. 
Furthermore,
\begin{align*}
\cost(\match_\tau, t) 
&\le \cost(\match_\tau, t_\tau) + \|t - t_\tau\| 
=\optcost (t_\tau)+\|t - t_\tau\|\\
&\le\optcost(t) + 2\|t-t_\tau\| \le 
\optcost(t) + \frac{\eps}{4\sqrt{2}}\optcost(t_0) \\
& \le \optcost(t) + 
\frac{\eps}{4\sqrt{2}}\cdot\frac{\sqrt{2}}{\sqrt{2}-1} \optcost (t) 
	\le (1+\eps)\optcost(t) .
\end{align*}
Finally, suppose $t \in B_i \setminus B_{i-1}$, for some $i \ge 1$. 
Since $t \not\in B_{i-1}$, we have $\|t-t_0\| \ge 2^{i-2}\optcost(t_0)$. 
Let $\tau$ be the grid cell of $B_i\setminus B_{i-1}$ containing $t$, 
and let $t_\tau$ be its center. Then 
$ 
\|t-t_\tau\| \le \frac{2^{i-3}}{\sqrt{2}} \eps \cdot\optcost(t_0)$. 
Starting with the inequality that was established above, we get
\begin{align*}
\cost(\match_\tau, t) 
&\le \optcost(t) + 2\|t-t_\tau\| 
\le \optcost(t) + 2\frac{2^{i-3}\eps}{\sqrt{2}}\optcost(t_0)\\
&\le \optcost(t) + \frac{\eps}{\sqrt{2}} \|t-t_0\| 
\le \optcost(t) + \frac{\eps}{\sqrt{2}}\optcost(t)
\le (1+\eps)\optcost(t).\qedhere
\end{align*}
\end{proof}

Similar to the $O(1)$-approximate matching diagram, we can improve 
the construction time by
setting $\eps'=\eps/3$ instead of $\eps$
and
 computing 
a $(1+ \eps/2)$-approximate optimal matching 
(instead of the exact matching) for the center of every cell:
\begin{theorem}
Let $A, B \subseteq \reals^2$, with $|A| = m$, $|B| = n$, $m \leq n$ and 
a size parameter $k \leq m$. 
For $0<\eps\le 1$, one can compute a $(1+\eps)$-approximate $k$-matching 
diagram of $A$ and $B$, 
of size $O(\frac{mn}{\eps^2}\log \frac{1}{\eps})$, in 
$O(\frac{mn}{\eps^2}\log \frac{1}{\eps})W(m,n,k,\frac\eps2)$ time.
\end{theorem}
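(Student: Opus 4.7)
The plan is to reduce to Lemma~\ref{approxmap} by making two adjustments to its construction: first, use the nested-grid refinement of $\VD(T)$ with parameter $\eps' = \eps/3$ in place of $\eps$; second, replace the exact matching $\match_{t_\tau}$ associated with each cell center $t_\tau$ by a $(1+\eps/2)$-approximate matching $\approxmatch_{t_\tau}$, computed by the algorithm underlying $W(m,n,k,\eps/2)$ (e.g., \cite{SA12,VA99}). The size of the resulting subdivision is the same as in Lemma~\ref{approxmap} up to constants, namely $O\bigl(\frac{mn}{\eps^2}\log\frac{1}{\eps}\bigr)$, because shrinking $\eps$ by a factor of $3$ only changes the hidden constant.

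To bound the approximation ratio, I would rerun the case analysis of Lemma~\ref{approxmap} verbatim with $\eps$ replaced by $\eps'=\eps/3$, noting that in each case the inequality
\[
\cost(\match_\tau, t) \le \cost(\match_\tau, t_\tau) + \|t - t_\tau\|
\]
(from Lemma~\ref{lem:shift}) is the only place where the optimality of $\match_\tau$ at $t_\tau$ is used. Replacing $\match_\tau$ by $\approxmatch_\tau$, the first term on the right becomes $(1+\eps/2)\optcost(t_\tau)$ instead of $\optcost(t_\tau)$, so the overall bound of Lemma~\ref{approxmap}, which was $(1+\eps')\optcost(t)$, is inflated by at most a factor of $(1+\eps/2)$, yielding
\[
\cost(\approxmatch_\tau, t) \le (1+\eps/3)(1+\eps/2)\optcost(t) \le (1+\eps)\optcost(t),
\]
the last step using $\eps \le 1$, so that $\eps/3 + \eps/2 + \eps^2/6 \le \eps$.

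One detail that requires care is that the grid sides in the nested construction are defined in terms of $\optcost(t_0)$, which we do not compute exactly. I would replace $\optcost(t_0)$ by the approximate value $\widetilde c(t_0) := \cost(\approxmatch_{t_0}, t_0)$, which satisfies $\optcost(t_0) \le \widetilde c(t_0) \le (1+\eps/2)\optcost(t_0)$; the case analysis then goes through with all constants inflated by at most $(1+\eps/2)$, which is absorbed into the same $(1+\eps)$ bound (by choosing a slightly smaller constant for $\eps'$ if needed, e.g.\ $\eps'=\eps/4$). For the running time, the dominating cost is the computation of $\approxmatch_{t_\tau}$ at each of the $O\bigl(\frac{mn}{\eps^2}\log\frac{1}{\eps}\bigr)$ cell centers $t_\tau$; each such call takes $W(m,n,k,\eps/2)$ time, and all other work (computing $\VD(T)$, building the nested grids, and assembling $\approxmap$) is absorbed into the additional $O(mn\log(mn))$ overhead, which is itself absorbed into the stated bound.

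The main obstacle I anticipate is just the bookkeeping in step two: one must verify, case by case, that replacing $\optcost(t_\tau)$ with $(1+\eps/2)\optcost(t_\tau)$ (and replacing $\optcost(t_0)$ with $\widetilde c(t_0)$ in the grid definitions) never breaks the chain of inequalities and that the final multiplicative error stays within $(1+\eps)$ for all $\eps \in (0,1]$. No new geometric idea is needed beyond Lemma~\ref{approxmap} and Theorem~\ref{const-factor-delta}.
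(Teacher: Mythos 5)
Your proposal matches the paper's own construction: the paper proves this theorem exactly by taking the nested-grid refinement of $\VD(T)$ from Lemma~\ref{approxmap} with $\eps'=\eps/3$ and replacing the exact matching at each cell center by a $(1+\eps/2)$-approximate one, so that the two error factors multiply to at most $1+\eps$. Your additional care about the grid scale being defined via an approximation of $\optcost(t_0)$ is a legitimate detail the paper glosses over, and your resolution (absorbing the extra $(1+\eps/2)$ factor by shrinking $\eps'$) is the standard and correct fix.
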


\section{Improved Algorithms}
\label{sec:improved}

We now present techniques that improve, by a factor of 
$m$ or of $k$, both algorithms for computing an approximate optimal 
matching 
and an approximate matching diagram. These algorithms
work well for the case $k\approx m$, and they deteriorate when $k$ becomes small.
The first algorithm
is based on an idea
of
Cabello~\etal~\cite[Lemma~2]{CGKR}: The best matching contains a
substantial number of edges whose length does not exceed the optimum 
cost by
more than a constant factor (cf. Lemma~\ref{markov}).
This gives a randomized constant-factor approximation
algorithm that requires $O(mn/k)$ approximate
matching
computations between stationary sets in order to succeed with probability $\frac12$
(Theorem~\ref{random}).
We proceed to an improved algorithm that computes a
 constant-factor approximation with  the same number
of
fixed-translation matching calculations \emph{deterministically}.
By tiling the vicinity
of each candidate translation by an $\eps$-grid, we then obtain a
$(1+\eps)$-approximation 
(Theorem~\ref{disk-eating}).

Markov's inequality bounds the number of items in a sample that are
substantially above average. We will use the following consequence of
it:
\begin{lemma}
  \label{markov}
  Let $M$ be a matching of size $k$ between a (possibly
  translated) set $A$ and a set~$B$, with cost~$\mu$. Let $0<c\le 1$.
Then the number of pairs $(a,b)\in M$ for which
$\|a-b \| < (1+c) \mu$ is at least
$k-k/(1+c)^{p}$.
\end{lemma}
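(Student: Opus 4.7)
The plan is to prove this via a direct Markov-style counting argument. Let $K$ denote the number of pairs $(a,b) \in M$ whose length $\|a-b\|$ is at least $(1+c)\mu$; we want to show $K \le k/(1+c)^p$, which is equivalent to the claimed lower bound on the complement.

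For finite $p$, I would start from the definition of $\cost(M, 0) = \mu$, which gives $\sum_{(a,b) \in M} \|a-b\|^p = k \mu^p$. Each of the $K$ ``long'' pairs contributes at least $(1+c)^p \mu^p$ to this sum, while the remaining summands are nonnegative. Thus
\[
K \cdot (1+c)^p \mu^p \;\le\; \sum_{(a,b) \in M} \|a-b\|^p \;=\; k \mu^p ,
\]
which immediately yields $K \le k/(1+c)^p$, and hence at least $k - k/(1+c)^p$ pairs satisfy $\|a-b\| < (1+c)\mu$.

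For $p = \infty$, the statement holds trivially: $\mu = \max_{(a,b) \in M} \|a-b\|$, so every pair satisfies $\|a-b\| \le \mu < (1+c)\mu$ (using $c > 0$, which is implicit since the conclusion is vacuous otherwise), and the bound $k - k/(1+c)^p$ degenerates to $k$ under the convention that $1/(1+c)^\infty = 0$.

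There is no real obstacle here; the only subtlety worth flagging is that the argument uses only the nonnegativity of the remaining terms in the $p$-th-power sum, so the bound is tight when all ``short'' edges have length $0$. I would present the finite-$p$ case as the main argument and dispatch $p = \infty$ in a single sentence at the end.
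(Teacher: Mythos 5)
Your argument is correct and is essentially the paper's proof: the paper establishes the same bound by assuming more than $k/(1+c)^p$ pairs have length at least $(1+c)\mu$ and deriving $\mu > \mu$ from the identity $\sum_{(a,b)\in M}\|a-b\|^p = k\mu^p$, which is just the contrapositive packaging of your direct Markov-style count. Your handling of $p=\infty$ also matches the paper's convention that $(1+c)^\infty = \infty$.
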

\begin{proof}
For $p=\infty$, we interpret $(1+c)^p$ as $\infty$, and
  the result is obvious 
 because
$\|a-b \| < (1+c) \mu$ for
  \emph{all} pairs $(a,b)$.
For
  $1\le p<\infty$,
    we argue by
  contradiction. The total number of pairs is~$k$. If there were more than
  $k/(1+c)^{p}$ pairs
   $(a,b)\in M$ with
   $\|a-b \| \ge (1+c) \mu$, the total cost would be
\[
\mu =
\cost(M)=
\left[\tfrac 1k \cdot \sum\nolimits_{(a,b)\in M} \|a-b \|^p\right]^{1/p} >
   \left[\tfrac 1k \cdot k/(1+c)^{p} \cdot ((1+c) \mu)^p
   \right]^{1/p}
   =
  \mu.
  \qedhere \]
\end{proof}

Consider the optimal translation $t^*$ and the corresponding optimal 
matching $\match^*$. By the lemma,
the fraction
of the pairs $(a, b) \in \match^*$ that satisfy
$\|a + t^*- b\| \le (1+c) \optcost(t^*)$
is
at least
 $
1-1/(1+c)^{p}\ge
1-1/(e^{c/2})^{p}
= 
1-e^{-cp/2}$, since $c \leq 1$.
Hence, with probability 
at least $(1-e^{-cp/2})\frac km$,
a randomly chosen $a \in A$ will participate
in such 
a ``close'' pair of~$\match^*$. We do not know the $b \in B$ with $(a, b) \in \match^*$,
so we try all $n$ possibilities. 
That is, we choose a single random point $a_0\in A$, and we try all 
$n$ translations $b - a_0 \in T$, 
returning the minimum-weight partial matching over these translations. 
With probability at least $(1-e^{-cp/2})\frac km$,
we get, by Lemma~\ref{lem:apx4}, a matching whose weight is at most
$\optcost(t^*) + (1+c) \optcost(t^*) = (2+c)\optcost(t^*)$.
The runtime of this procedure is $n \cdot W(m,n,k)$,
or $n\cdot W(m,n,k,\delta)$
if we compute at each of the above translations~$t_0$ a $(1+\delta)$-approximation to $\optcost(t_0)$.
To boost the success probability,
we repeat this 
drawing process $s$ times
and obtain a $(2+c)(1+\delta)$-approximation to the 
best matching, with probability at least $1-\left(1-(1-e^{-cp/2})\frac km\right)^s$.
By setting
$c=\delta=\eps/4$,
 we get
the following theorem.
\begin{theorem}
  \label{random}
Let $A, B \subset \reals^2$ with $|A| = m$ and $|B| = n$,  $m \leq n$,
and let $k \leq m$ and
$s\ge 1$ be parameters. 
Then, a translation vector $\tilde t\in\reals^2$ and a
matching $\approxmatch$
of size $k$ between $A$ and $B$ can be computed
in $O(sn\cdot W(m,n,k,\eps/4))$ time, such that 
$\cost(\approxmatch,\tilde t) \le (2+\eps)\optcost(t^*)$ with probability
at least $1 - \left(1-(1-e^{-\eps p/8})\frac{k}{m}\right)^s
$, for any $\eps$ with $0<\eps\le1$.
\qed
\end{theorem}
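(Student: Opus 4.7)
The plan is to execute the strategy sketched in the paragraphs preceding the statement, instantiated with the specific choice $c = \delta = \eps/4$. First I would fix an optimal translation $t^*$ together with its optimal matching $\match^*$ and apply Lemma~\ref{markov} to $\match^*$ viewed as a matching between the translated set $A+t^*$ and $B$. The lemma guarantees that at least $k - k/(1+c)^p$ of the pairs $(a,b)\in\match^*$ are \emph{short}, meaning $\|a + t^* - b\| \le (1+c)\optcost(t^*)$. The elementary inequality $(1+c)^p \ge e^{cp/2}$ for $c \in (0,1]$ (equivalently $\ln(1+c) \ge c/2$, which follows from one line of calculus since the concave gap $\ln(1+c)-c/2$ vanishes at $0$ and is positive at $c=1$) then rewrites this count as at least $k\bigl(1 - e^{-\eps p/8}\bigr)$.

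The single-round procedure I would analyze is: sample $a_0 \in A$ uniformly at random, and for each $b \in B$ set $t_0 := b - a_0 \in T$ and run the $(1+\eps/4)$-approximate stationary matching algorithm at translation~$t_0$; keep, over the $n$ candidates, the pair $(\approxmatch, \tilde t)=(\approxmatch, t_0)$ that minimizes $\cost(\approxmatch, t_0)$. Call $a_0$ \emph{lucky} if the edge of $\match^*$ incident to $a_0$ is short; by the previous paragraph, $a_0$ is lucky with probability at least $(1 - e^{-\eps p/8})\,k/m$. When $a_0$ is lucky with partner $b_0$, the translation $t_0 = b_0 - a_0$ is among the $n$ candidates and satisfies $\|t_0 - t^*\| = \|a_0 + t^* - b_0\| \le (1+c)\optcost(t^*)$. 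Corollary~\ref{cor:shift} then gives $\optcost(t_0) \le \optcost(t^*) + \|t_0 - t^*\| \le (2+c)\optcost(t^*)$, so the $(1+\delta)$-approximate matching computed at $t_0$ has cost at most $(1+\delta)(2+c)\optcost(t^*)$. Substituting $c = \delta = \eps/4$ and using $\eps \le 1$, one checks $(1+\eps/4)(2+\eps/4) = 2 + 3\eps/4 + \eps^2/16 \le 2 + \eps$, giving the claimed approximation ratio for the best pair produced in a lucky round.

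Finally, to boost the success probability I would run this round $s$ times independently and return the best $(\approxmatch,\tilde t)$ seen across all rounds. Independence of the rounds gives an overall failure probability of at most $\bigl(1 - (1 - e^{-\eps p/8})k/m\bigr)^s$, matching the stated bound. Each round performs $n$ calls to the $(1+\eps/4)$-approximate stationary matching subroutine at cost $W(m,n,k,\eps/4)$ apiece, for a total running time of $O(sn \cdot W(m,n,k,\eps/4))$.

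I do not anticipate a genuine obstacle: the argument is a direct composition of Lemma~\ref{markov}, the triangle-inequality identification of a short-edge translation with a point of $T$, and the Lipschitz continuity of $\optcost$ (Corollary~\ref{cor:shift}), stitched together by the stationary-matching black box. The only places that warrant a brief check are the bound $(1+c)^p \ge e^{cp/2}$ used to rewrite the Markov count (including the degenerate interpretation at $p=\infty$) and the constant arithmetic $(1+\eps/4)(2+\eps/4) \le 2+\eps$ for $\eps \le 1$; both are routine.
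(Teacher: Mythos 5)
Your proposal is correct and follows essentially the same route as the paper: apply Lemma~\ref{markov} to $\match^*$ with $1+c$, convert $(1+c)^{-p}$ to $e^{-cp/2}$, sample $a_0$ and try all $n$ translations $b-a_0$, bound the cost at a lucky translation via Lipschitz continuity, and repeat $s$ times with $c=\delta=\eps/4$. The only cosmetic difference is that you invoke Corollary~\ref{cor:shift} directly where the paper cites Lemma~\ref{lem:apx4}; your choice is if anything the cleaner reference, and your explicit check that $(1+\eps/4)(2+\eps/4)\le 2+\eps$ fills in arithmetic the paper leaves implicit.
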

If $\eps p$ is small, the probability is approximately equal to the
simpler
expression $1-e^{-s\cdot \eps pk/8m}$.

Cabello et~al.~\cite{CGKR}
proceeded from this result to
a $(1+\eps)$-approximation
by
tiling the vicinity of each selected translation with an
$\eps$-grid~\cite[Theorem~7]{CGKR}. We will first replace the
randomized algorithm by a deterministic one, and apply the
$\eps$-grid refinement afterwards.

We now describe a deterministic algorithm for approximating $t^*$ 
and the corresponding matching $\match^*$. 
At a high level, the $mn$ points of $T$ are partitioned into 
$O(mn/k)$ clusters of size $\Omega(k)$, and one point, not necessarily 
from $T$, is chosen to represent each cluster.
We will argue that the point in the resulting set $\X$ of representatives
that is nearest to $t^*$ yields a matching whose value
at $t^*$ is an $O(1)$-approximation of $\optcost(t^*)$.

Here is the main idea of how we cluster the points in $T$ and 
construct $\X$, in an incremental manner. In step $i$, we greedily choose 
the smallest disk $D_i$ that contains $k/2$ points of $T$ 
(or all of $T$, if $|T| \le k/2$), add the center of $D_i$ to $\X$, 
delete the points of $D_i\cap T$ from $T$, and repeat. 
Carmi~\etal~\cite{CDM*} have described 
an efficient algorithm for this clustering problem. It 
preprocesses $T$ into a data structure (consisting of three compressed 
quadtrees) in  $O(mn\log n)$ time, so that
in step $i$, the disk $D_i$ can be computed in $\softO(k^2)$ time
and 
$D_i\cap T$ can be deleted from the data structure in $\softO(k^2)$ 
time, leading to an $\softO(mnk)$-time 
algorithm. They also present a faster approximation algorithm for this 
clustering problem: in step $i$, 
instead of computing the smallest enclosing disk $D_i$, they show 
that a disk of radius at most twice that of $D_i$ that still contains $k/2$ 
points of $T$ can be computed in $\softO(k)$ time, and that 
$D_i\cap T$ can be deleted in $\softO(k)$ time, thereby improving the 
overall running time to $\softO(mn)$. This approximation algorithm is 
sufficient for our purpose. We next give a more formal description of
our method:

At the beginning of step $i$, we have a set $P_i \subseteq T$ and 
the current set $\X$.  Initially,
$P_1 = T$ and  $\X = \emptyset$.  We preprocess $P_1$, in 
$\softO(|T|)=\softO(mn)$ time, into the data structure described by 
Carmi~\etal~\cite{CDM*}.
We perform the following operations in step $i$: if 
$P_i=\emptyset$, the algorithm terminates.
If $0<|P_i| \le k/2$, we compute the smallest disk $D_i$ containing
$P_i$. If $|P_i| > k/2$, then let $\rho_i^*$ be the radius of the
smallest disk that contains at least $k/2$ points of $P_i$. Using
the algorithm in~\cite{CDM*}, we compute a disk $D_i$ of radius 
$\rho_i\le 2\rho_i^*$ containing at least $k/2$ points of $P_i$. 
We add the center $\xi_i$ of $D_i$ to $\X$, and we set 
$P_{i+1}:=P_i \setminus D_i$. We remove $P_i\cap D_i$ from the 
data structure, as described in~\cite{CDM*}.
Let $\D$ be the set of disks computed by the above procedure. By 
construction, $\rho_i^* \le \rho_{i+1}^*$, $\rho_i \le 
2\rho^*_i \le
2\rho^*_{i+1}\le
2\rho_{i+1}$,
and $|\X|=|\D|\le 2mn/k$. 
The following two lemmas establish the correctness of our method. 
\begin{lemma}
\label{lem:cluster}
Let $t\in\reals^2$ be a translation vector, and let $\xi_0$ be its 
nearest neighbor in $\X$.  Then $\|t-\xi_0\| \le 
3\cdot 2^{1/p}\optcost(t)$.
\end{lemma}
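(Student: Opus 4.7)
The plan is to use the Markov-type bound in Lemma~\ref{markov} to produce a dense cluster of points of $T$ near~$t$, and then argue that the greedy disk-eating procedure must place a representative close to this cluster.

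First, I would apply Lemma~\ref{markov} to the optimal matching $\match_t$ (whose cost is $\mu = \optcost(t)$) with $c := 2^{1/p}-1$, so that $(1+c)^p = 2$. This yields at least $k - k/2 = k/2$ pairs $(a,b)\in\match_t$ with $\|a+t-b\| < 2^{1/p}\optcost(t)$; for $p=\infty$ the same conclusion holds with $2^{1/p}=1$, since every matched edge has length at most $\optcost(t)$. Each such pair gives a point-to-point translation $q = b-a \in T$ at distance $\|q-t\| = \|a+t-b\| < 2^{1/p}\optcost(t)$ from $t$. Hence the disk $B := B\bigl(t,\, 2^{1/p}\optcost(t)\bigr)$ contains at least $k/2$ points of $T$.

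Next, let $j$ be the first step of the clustering procedure for which $D_j \cap B \cap T \neq \emptyset$; such a step exists because every point of $T$ is eventually deleted. By the choice of $j$, the set $P_j$ at the start of step $j$ still contains every point of $B\cap T$, so $P_j$ has at least $k/2$ points inside the disk $B$ of radius $2^{1/p}\optcost(t)$. I would then split into the two branches of the algorithm: if $|P_j| > k/2$, the definition of $\rho_j^*$ gives $\rho_j^* \le 2^{1/p}\optcost(t)$, hence $\rho_j \le 2\rho_j^* \le 2\cdot 2^{1/p}\optcost(t)$; if $|P_j| \le k/2$, then necessarily $P_j = B\cap T$, and the smallest enclosing disk of $P_j$ has radius at most $2^{1/p}\optcost(t)$. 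In either case $\rho_j \le 2\cdot 2^{1/p}\optcost(t)$.

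Finally, picking any $q \in D_j \cap B \cap T$, the triangle inequality gives
\[
\|t-\xi_j\| \;\le\; \|t-q\| + \|q-\xi_j\| \;\le\; 2^{1/p}\optcost(t) + \rho_j \;\le\; 3\cdot 2^{1/p}\optcost(t),
\]
and since $\xi_j \in \X$, the nearest neighbor $\xi_0$ of $t$ in $\X$ satisfies $\|t-\xi_0\| \le \|t-\xi_j\|$, which is the claimed bound. The creative step is the calibration $c = 2^{1/p}-1$ in Markov's inequality, chosen so that the guaranteed number of short edges matches the threshold $k/2$ used by the clustering; the main piece of bookkeeping is the boundary case $|P_j|\le k/2$, where one must observe that $P_j$ and $B\cap T$ coincide so that the smallest-enclosing-disk branch still yields a good radius bound.
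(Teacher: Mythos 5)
Your proof is correct and follows essentially the same route as the paper: apply Lemma~\ref{markov} with $1+c=2^{1/p}$ to place at least $k/2$ points of $T$ in the disk of radius $2^{1/p}\optcost(t)$ around $t$, take the first greedy disk $D_j$ that touches this cluster, bound $\rho_j\le 2\cdot 2^{1/p}\optcost(t)$, and finish with the triangle inequality. You are in fact slightly more careful than the paper in treating the boundary branch $|P_j|\le k/2$ explicitly (the paper's argument implicitly assumes the $\rho_j^*$ branch); only a cosmetic remark: reusing the letter $B$ for your disk collides with the point set $B$.
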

\begin{proof}
Let $D$ be the disk of radius $2^{1/p}\optcost(t)$ centered at $t$, 
and let $S=D\cap T$. By
 Lemma~\ref{markov} with $1+c = 2^{1/p}$,
we have $|S| \ge k/2$.
Let $D_i$ be the first 
disk chosen by the above procedure that contains 
a point $t_0$ of $S$, so $S \subseteq P_i$. We must have 
$\rho_i^* \le 2^{1/p}\optcost(t)$, because the smallest disk 
that contains at least $k/2$ points of $P_i$ is not larger than $D$. 
Hence, $\rho_i \le 2\cdot 2^{1/p}\optcost(t)$, and
\begin{align*}
\|t-\xi_i\| \le \|t-t_0\| + \|t_0-\xi_i\| &\le 2^{1/p}\optcost(t)+\rho_i 
\\&\le 
2^{1/p}\optcost(t)+ 2 \cdot 2^{1/p}\optcost(t) = 3 \cdot 
2^{1/p}\optcost(t).\qedhere
\end{align*}
\end{proof}
\begin{lemma}
$\displaystyle 
\min_{\xi\in \X} \optcost(\xi) \le (1+3 \cdot 2^{1/p})\optcost(t^*)$.
\end{lemma}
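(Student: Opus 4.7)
The plan is essentially to combine the nearest-neighbor bound of Lemma~\ref{lem:cluster} with the Lipschitz continuity of $\optcost$ provided by Corollary~\ref{cor:shift}. Concretely, I would first apply Lemma~\ref{lem:cluster} to the particular translation vector $t=t^*$. This yields a representative $\xi_0 \in \X$ (the nearest neighbor of $t^*$ in $\X$) that satisfies $\|t^* - \xi_0\| \le 3\cdot 2^{1/p}\optcost(t^*)$.

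Next, I would invoke Corollary~\ref{cor:shift} with $t_1 = t^*$ and $t_2 = \xi_0$ to obtain
\[
\optcost(\xi_0) \le \optcost(t^*) + \|\xi_0 - t^*\| \le \optcost(t^*) + 3\cdot 2^{1/p}\optcost(t^*) = (1+3\cdot 2^{1/p})\optcost(t^*).
\]
Since $\xi_0 \in \X$, the minimum $\min_{\xi \in \X}\optcost(\xi)$ is at most $\optcost(\xi_0)$, and the claimed inequality follows immediately.

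There is really no obstacle here: the two ingredients are doing all the work, and the lemma is just their composition at the specific point $t^*$. The only thing worth double-checking is that Lemma~\ref{lem:cluster} applies to arbitrary translation vectors (it does, as stated), so taking $t = t^*$ is legitimate. The result is then a one-line chain of inequalities.
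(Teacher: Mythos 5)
Your proposal is correct and matches the paper's own proof exactly: apply Lemma~\ref{lem:cluster} at $t=t^*$ to bound $\|t^*-\xi_0\|$, then use Corollary~\ref{cor:shift} to transfer this to a bound on $\optcost(\xi_0)$. Nothing to add.
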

\begin{proof}
Let $\xi_0$ be the nearest neighbor to $t^*$ in $\X$. Applying 
Lemma~\ref{lem:cluster} with $t=t^*$, we obtain
$\|t^*-\xi_0\|\le 3 \cdot 2^{1/p}\optcost(t^*)$. By 
Corollary~\ref{cor:shift}, we then have
$ \optcost(\xi_0) \le \optcost(t^*) + \|t^*-\xi_0\| \le
(1+3 \cdot 2^{1/p})\optcost(t^*)$.
\end{proof}

We fix a constant $\delta\in (0,1]$. We compute a 
$(1+\delta)$-approximate $k$-matching $M_\xi$ between $A+\xi$ and 
$B$, for every $\xi\in\X$, and choose the best among them. 
This will give an $O(1)$-approximation of the minimum-cost 
$k$-matching under translation.
We can extend this algorithm to yield a 
$(1+\eps)$-approximation algorithm following the 
same procedure as in Section~\ref{sec:e-approx}:
We draw a 
disk of radius $(1+3\cdot 2^{1/p}+4\eps)\optcost(t^*)$ 
around each point of $\X$.
We draw a uniform grid of cell size
$O(\eps)$ and
look at all vertices $t$ of grid cells that overlap 
one of these disks at least partially.
We
compute a $(1+\eps/2)$-approximation 
 for the best matching of size~$k$ between $A+t$ and $B$ for each 
 of the grid point $t$ under consideration, and 
we choose the best matching among them. 
 Putting everything together, 
we obtain the following:

\begin{theorem}
  \label{disk-eating}
Let $A, B \subset \reals^2$, with $|A| = m$ and $|B| = n$, and let 
$0<\eps\le 1$ and $k \le \min\{m,n\}
$ be parameters. 
Then, a translation vector $\tilde t\in\reals^2$ and 
a matching $\approxmatch$ of size~$k$ between $A$ and $B$ can be
computed
in $\softO(mn + \tfrac{mn}{\eps^2k}W(m,n,k,\frac\eps2))$ time, such that 
$\cost(\approxmatch,\tilde t) \le (1+\eps)\optcost(t^*)$.
\qed
\end{theorem}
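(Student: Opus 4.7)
The plan is to combine the disk-eating clustering with an $\eps$-grid refinement, in analogy with what was done in Section~\ref{sec:e-approx}, but using the set $\X$ in place of all of $T$. First I would run the approximate clustering algorithm of Carmi~\etal\ from the preceding construction to obtain the set $\X$ of $O(mn/k)$ representatives in $\softO(mn)$ time. Then, for each $\xi \in \X$, I would compute a $(1+\delta)$-approximate $k$-matching between $A+\xi$ and $B$, for a small fixed constant $\delta$ (say $\delta=1/2$). Let $\tilde r$ be the smallest weight obtained. By the preceding lemma and a $(1+\delta)$-approximation guarantee, $\optcost(t^*) \le \tilde r \le (1+\delta)(1+3\cdot 2^{1/p})\optcost(t^*)$, so $\tilde r$ is a constant-factor estimate of $\optcost(t^*)$.

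Next I would refine the search around each representative. Let $\xi_0\in\X$ be the nearest neighbor of $t^*$ in $\X$. By Lemma~\ref{lem:cluster} (with $t=t^*$), $\|t^*-\xi_0\| \le 3\cdot 2^{1/p}\optcost(t^*)$. Accordingly, I would place, around each $\xi\in\X$, a uniform square grid of side-length $\delta' := c\eps\tilde r$ (for an appropriate small constant $c$) covering a disk of radius $(1+3\cdot 2^{1/p}+o(1))\tilde r$. Since $\tilde r = \Theta(\optcost(t^*))$, this disk around $\xi_0$ contains $t^*$, and hence some grid vertex $g$ satisfies $\|t^*-g\|\le \delta'/\sqrt 2$. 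For every grid vertex $t$ considered, I would compute a $(1+\eps/2)$-approximate $k$-matching between $A+t$ and $B$, and then output the translation and matching of smallest cost.

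To verify the approximation, I would argue as in Lemma~\ref{approxmap}: by Corollary~\ref{cor:shift}, $\optcost(g) \le \optcost(t^*)+\|t^*-g\|\le (1+\tfrac{\eps}{3})\optcost(t^*)$ once the constant $c$ is chosen small enough, so the $(1+\eps/2)$-approximate matching computed at $g$ has cost at most $(1+\tfrac{\eps}{3})(1+\tfrac{\eps}{2})\optcost(t^*)\le (1+\eps)\optcost(t^*)$. Finally, the running time: the clustering costs $\softO(mn)$; the constant-factor pass runs $W(m,n,k,\delta)$ at each of the $O(mn/k)$ points of $\X$, for $\softO(\tfrac{mn}{k}W(m,n,k,\delta))$ time; and the refinement uses $O(1/\eps^2)$ grid vertices per representative for a total of $O(mn/(\eps^2 k))$ matching computations, each taking $W(m,n,k,\eps/2)$, giving the claimed bound.

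The main obstacle I expect is a bookkeeping one rather than a conceptual one: showing that the estimate $\tilde r$, obtained by $(1+\delta)$-approximate matchings at the representatives rather than exact ones, still gives a disk that is guaranteed to contain $t^*$ and whose radius is $O(\optcost(t^*))$, so that the side-length $\delta' = \Theta(\eps\optcost(t^*))$ is small enough to yield the additive $\tfrac{\eps}{3}\optcost(t^*)$ error on $\optcost(g)$. Once this is made precise, balancing the constants so that $(1+\tfrac{\eps}{3})(1+\tfrac{\eps}{2})\le 1+\eps$ (which holds for $\eps\le 1$) completes the proof.
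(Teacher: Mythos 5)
Your proposal is correct and follows essentially the same route as the paper: disk-eating clustering to get $\X$ of size $O(mn/k)$, a constant-factor pass over $\X$ with approximate fixed-translation matchings, and then an $\eps$-grid of cell size $\Theta(\eps\optcost(t^*))$ over disks of radius $O(\optcost(t^*))$ around each representative, with a $(1+\eps/2)$-approximate matching at each grid vertex. Your explicit handling of the estimate $\tilde r$ (and the check that $\tilde r\ge\optcost(t^*)$ so the disk around $\xi_0$ still captures $t^*$) just fills in a detail the paper leaves implicit.
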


We show that $\VD(\X)$ is indeed an $O(1)$-approximate matching diagram 
of $A$ and $B$. This is analogous to
Section~\ref{approximate-diagram}
(Lemma~\ref{lem:apx4}).
\begin{lemma}
Let $t\in\reals^2$ be a translation vector, and let $\xi_0$ be its 
nearest neighbor in $\X$. Then,
$
\optcost(t) \le \cost(\match_{\xi_0}, t) \le (1+6\cdot2^{1/p})\optcost(t)$.
\end{lemma}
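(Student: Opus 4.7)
The plan is to mirror the argument used for Lemma~\ref{lem:apx4}, replacing the role of the closest point-to-point translation in $T$ (and the bound from Lemma~\ref{lem:nn}) with the closest representative in $\X$ (and the bound from Lemma~\ref{lem:cluster}).

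First I would dispose of the lower bound $\optcost(t)\le\cost(\match_{\xi_0},t)$ as immediate from the definition of $\optcost$, since $\match_{\xi_0}$ is a valid $k$-matching between $A$ and $B$.

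For the upper bound, the main ingredients are already in place. By Lemma~\ref{lem:cluster} applied to $t$, we have $\|t-\xi_0\|\le 3\cdot 2^{1/p}\optcost(t)$. Combine this with two Lipschitz inequalities: Lemma~\ref{lem:shift} applied to the \emph{fixed} matching $\match_{\xi_0}$ gives
\[
\cost(\match_{\xi_0},t)\le \cost(\match_{\xi_0},\xi_0)+\|t-\xi_0\|
= \optcost(\xi_0)+\|t-\xi_0\|,
\]
while Corollary~\ref{cor:shift} gives $\optcost(\xi_0)\le \optcost(t)+\|t-\xi_0\|$. Adding these and substituting the bound on $\|t-\xi_0\|$ yields
\[
\cost(\match_{\xi_0},t)\le \optcost(t)+2\|t-\xi_0\|\le (1+6\cdot 2^{1/p})\optcost(t),
\]
as required.

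There is no real obstacle here; the only thing to watch is that Lemma~\ref{lem:cluster} is invoked with $t$ itself (not with $t^*$), so the bound on $\|t-\xi_0\|$ is in terms of $\optcost(t)$ rather than $\optcost(t^*)$, which is exactly what produces the clean multiplicative factor. The case $p=\infty$ causes no trouble because Lemmas~\ref{lem:shift}, \ref{lem:cluster} and Corollary~\ref{cor:shift} were all established uniformly over $p\in[1,\infty]$, with $2^{1/p}$ interpreted as $1$ when $p=\infty$.
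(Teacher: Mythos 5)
Your proposal is correct and matches the paper's own proof essentially verbatim: the lower bound from the definition of $\optcost$, and the upper bound by chaining Lemma~\ref{lem:shift}, Corollary~\ref{cor:shift}, and Lemma~\ref{lem:cluster} applied at $t$ to get $\cost(\match_{\xi_0},t)\le\optcost(t)+2\|t-\xi_0\|\le(1+6\cdot 2^{1/p})\optcost(t)$. No differences worth noting.
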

\begin{proof}
Since $\match_{\xi_0}$ is a matching of size $k$ between $A$ and $B$, we have, 
by definition, 
$\optcost(t) \le \cost(\match_{\xi_0}, t)$. We now prove the 
second inequality. By Corollary~\ref{cor:shift}, 
$ 
\optcost(\xi_0) \le \optcost(t) + \|t-\xi_0\|$,
Lemma~\ref{lem:shift}, and Lemma~\ref{lem:cluster},
\begin{align*}
\cost(M_{\xi_0}, t) &\le 
\cost(M_{\xi_0}, \xi_0) + \|t-\xi_0\|
\\&
 = \optcost(\xi_0)+ \|t-\xi_0\|
\le \optcost(t) + 2\|t-\xi_0\| \le (1+6 \cdot 2^{1/p})\optcost(t).
\qedhere
\end{align*}
\end{proof}

The combinatorial complexity of $\VD(\X)$ is $O(mn/k)$. We can 
now construct a $(1+\eps)$-approximate 
matching diagram by refining each Voronoi cell of $\VD(\X)$, as 
in Section~\ref{sec:e-approx}, 
but the constants have to be chosen differently.
The diagram has
$O(\tfrac{mn}{k\eps^2}\log \tfrac{1}{\eps})$ cells, and we need
$W(m,n,k,\frac\eps2)$ time per cell.
We obtain the following:

\begin{theorem}
  \label{diagram-size}
Let $A, B \subset \reals^2$, $|A| = m \le |B| = n$, and let
$k \leq m$, $\eps \in (0, 1]$ be parameters. There exists 
a $(1+\eps)$-approximate $k$-matching diagram of $A$ and $B$ of size 
$O(\tfrac{mn}{k\eps^2}\log \tfrac{1}{\eps})$, 
and it can be computed in 
$\softO(mn) + O{\left(\tfrac{mn}{k\eps^2}\log \frac{1}\eps
    W(m,n,k,\frac\eps2)
\right)}$ time.
\qed
\end{theorem}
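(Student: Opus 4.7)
The plan is to combine the disk-eating clustering from Theorem~\ref{disk-eating} with the nested-grid refinement that was used in Section~\ref{sec:e-approx} to upgrade $\VD(T)$ into a $(1+\eps)$-approximate matching diagram. Concretely, I would first invoke the $\softO(mn)$-time algorithm of Carmi~\etal\ to build the set $\X$ of at most $2mn/k$ representatives, and then compute the Voronoi diagram $\VD(\X)$, which has complexity $O(mn/k)$. By the lemma just preceding this theorem, $\VD(\X)$ is already a $(1+6\cdot 2^{1/p})$-approximate matching diagram, so it plays the same role here that $\VD(T)$ plays in Section~\ref{sec:e-approx}, only with a different (but still $O(1)$) base approximation factor.

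Next, I would refine each Voronoi cell $\VC(\xi_0)$ exactly as in the proof of Lemma~\ref{approxmap}, but with the constants adjusted for the new base factor. Set $u := \lceil\log_2(1/\eps)\rceil + O(1)$. Center nested axis-aligned squares $B_0 \subset B_1 \subset \cdots \subset B_u$ at $\xi_0$, where $B_i$ has side-length $c_1\cdot 2^i \cdot \optcost(\xi_0)$ for a suitable constant $c_1$ depending only on $p$; inside the annulus $B_i \setminus B_{i-1}$ place a uniform grid of side-length $c_2\,\eps\, 2^i \cdot \optcost(\xi_0)$, clip each grid cell to $\VC(\xi_0)$, and associate with the cell the matching $\match_{t_\tau}$ computed at its center $t_\tau$. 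The region $\VC(\xi_0)\setminus B_u$ contributes at most a constant number of unbounded faces, each inheriting the matching $\match_{\xi_0}$. This produces $O(\tfrac{1}{\eps^2}\log\tfrac{1}{\eps})$ subcells per Voronoi face, and hence $O(\tfrac{mn}{k\eps^2}\log\tfrac{1}{\eps})$ cells in total, establishing the claimed size bound.

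The correctness argument follows the same three-case split as Lemma~\ref{approxmap}. For a query translation $t \in \VC(\xi_0)$: if $t \notin B_u$, then $\|t-\xi_0\|$ is so large relative to $\optcost(\xi_0)$ that Lemma~\ref{lem:nn} together with Lemma~\ref{lem:shift} already gives a $(1+\eps)$-bound using $\match_{\xi_0}$; if $t \in B_0$, Corollary~\ref{cor:shift} lower-bounds $\optcost(t)$ in terms of $\optcost(\xi_0)$, and the tiny grid spacing in $B_0$ ensures $\cost(\match_\tau,t) \le (1+\eps)\optcost(t)$; if $t \in B_i\setminus B_{i-1}$ for $i \ge 1$, the lower bound $\|t-\xi_0\| \ge \Omega(2^i\optcost(\xi_0))$ dominates the additive error $2\|t-t_\tau\|$ introduced by Lemma~\ref{lem:shift}, again yielding a $(1+\eps)$-bound. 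The main (indeed only) delicate point is calibrating the constants $c_1$ and $c_2$ in the nested squares and grids so that the accumulation of the base factor $(1+6\cdot 2^{1/p})$ from Lemma~\ref{lem:cluster}, the $\sqrt2$ losses from going between a grid vertex and the point $t$, and the $(1+\eps/2)$ slack from approximate stationary matching all fit inside $(1+\eps)$; this is a straightforward but bookkeeping-intensive scaling of the constants that appear in the proof of Lemma~\ref{approxmap}.

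Finally, for the running time, the disk-eating preprocessing costs $\softO(mn)$, the Voronoi diagram of $\X$ costs $O(\tfrac{mn}{k}\log\tfrac{mn}{k}) = \softO(mn/k)$, and for each of the $O(\tfrac{mn}{k\eps^2}\log\tfrac{1}{\eps})$ grid-cell centers we invoke the stationary $(1+\eps/2)$-approximate matching algorithm at cost $W(m,n,k,\eps/2)$. Summing gives the stated bound $\softO(mn) + O\big(\tfrac{mn}{k\eps^2}\log\tfrac{1}{\eps}\cdot W(m,n,k,\tfrac{\eps}{2})\big)$.
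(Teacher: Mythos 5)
Your plan is exactly the paper's: the published proof of Theorem~\ref{diagram-size} consists precisely of taking $\VD(\X)$ (complexity $O(mn/k)$), refining each Voronoi cell with the nested grids of Section~\ref{sec:e-approx} ``with the constants chosen differently,'' and charging one call to $W(m,n,k,\eps/2)$ per cell; your size and running-time accounting also matches.

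There is, however, one concrete misstep in your correctness argument, and it sits exactly at the point that is \emph{not} routine constant-calibration. For $t\in\VC(\xi_0)\setminus B_u$ you appeal to Lemma~\ref{lem:nn}, but that lemma bounds $\|t-t_0\|\le\optcost(t)$ only for the nearest neighbor $t_0$ of $t$ \emph{in $T$}; the points of $\X$ are disk centers and in general do not belong to $T$. The available substitute is Lemma~\ref{lem:cluster}, which gives only $\|t-\xi_0\|\le 3\cdot 2^{1/p}\optcost(t)$. In the two bounded cases this extra constant is harmless, because there the additive error $2\|t-t_\tau\|$ already carries a factor of $\eps$, so the factor $3\cdot 2^{1/p}$ can be absorbed by shrinking $c_2/c_1$. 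In the outer region the single unbounded face carries $\match_{\xi_0}$ and the transport error is the full $\|t-\xi_0\|$: the chain $\cost(\match_{\xi_0},t)\le\optcost(\xi_0)+\|t-\xi_0\|\le(1+O(\eps))\|t-\xi_0\|$ then yields only $(1+O(\eps))\cdot 3\cdot 2^{1/p}\optcost(t)$, a constant-factor bound rather than $1+\eps$, and enlarging $B_u$ or refining the grids does not help, since the offending term $\|t-\xi_0\|$ is not multiplied by $\eps$. Closing this gap requires an additional argument showing $\optcost(t)\ge(1-O(\eps))\|t-\xi_0\|$ for $t$ outside $B_u$, which in turn means the outer square must be scaled not only relative to $\optcost(\xi_0)$ but also relative to the radii of the disks produced by the clustering (the source of the slack in Lemma~\ref{lem:cluster}). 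As written, your outer case establishes only an $O(1)$-approximation.
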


For the case when $cm \le k \le (1-c)n$ for some constant $c>0$,
we can show that the bound in Theorem~\ref{diagram-size} on the size of the
diagram is tight in the worst case in terms of $m$, $n$, and $k$ (but
not of~$\eps$):
 If $A$ is a unit grid of size $\sqrt{m}\times\sqrt{m}$ 
 and $B$ is a unit grid of size $\sqrt{n}\times\sqrt{n}$, then
there are $\Omega(n)$ translation vectors at which $A$ and $B$ are 
perfectly aligned and have at least $k$ points in common. Thus, any $O(1)$-approximate matching diagram of $A$
and $B$ needs to have $\Omega(n)$ distinct faces.

\bibliography{match}

\newcommand{\SortNoop}[1]{}
\begin{thebibliography}{10}

\bibitem{AES}
Pankaj~K. Agarwal, Alon Efrat, and Micha Sharir.
\newblock Vertical decomposition of shallow levels in 3-dimensional
  arrangements and its applications.
\newblock {\em SIAM J. Comput.}, 29(3):912--953, 1999.

\bibitem{AG}
Helmut Alt and Leonidas~J. Guibas.
\newblock Discrete geometric shapes: Matching, interpolation, and
  approximation.
\newblock In J.R. Sack and J.~Urrutia, editors, {\em Handbook of Comput.
  Geom.}, pages 121--153. Elsevier, Amsterdam, 1999.

\bibitem{AHJ*}
Rinat Ben-Avraham, Matthias Henze, Rafel Jaume, Bal{\'a}zs Keszegh, Orit~E.
  Raz, Micha Sharir, and Igor Tubis.
\newblock Partial-matching {RMS} distance under translation: Combinatorics and
  algorithms.
\newblock {\em Algorithmica}, 80(8):2400--2421, 2018.

\bibitem{CGKR}
Sergio Cabello, Panos Giannopoulos, Christian Knauer, and G{\"{u}}nter Rote.
\newblock Matching point sets with respect to the {E}arth {M}over's {D}istance.
\newblock {\em Comput. Geom.}, 39(2):118--133, 2008.

\bibitem{CDM*}
Paz Carmi, Shlomi Dolev, Sariel Har-Peled, Matthew~J. Katz, and Michael Segal.
\newblock Geographic quorum system approximations.
\newblock {\em Algorithmica}, 41(4):233--244, 2005.

\bibitem{GT89}
Harold~N. Gabow and Robert~Endre Tarjan.
\newblock Faster scaling algorithms for network problems.
\newblock {\em SIAM J. Comput.}, 18(5):1013--1036, 1989.

\bibitem{GoldbergHeKaTa17}
Andrew~V. Goldberg, Sagi Hed, Haim Kaplan, and Robert~E. Tarjan.
\newblock Minimum-cost flows in unit-capacity networks.
\newblock {\em Theory Comput. Syst.}, 61(4):987--1010, 2017.

\bibitem{HJK}
Matthias Henze, Rafel Jaume, and Bal{\'{a}}zs Keszegh.
\newblock On the complexity of the partial least-squares matching {V}oronoi
  diagram.
\newblock In {\em Proc. 29th European Workshop Comput. Geom. (EWCG)}, pages
  193--196, 2013.

\bibitem{KMS*}
Haim Kaplan, Wolfgang Mulzer, Liam Roditty, Paul Seiferth, and Micha Sharir.
\newblock Dynamic planar {V}oronoi diagrams for general distance functions and
  their algorithmic applications.
\newblock In {\em Proc. 28th Annu. ACM-SIAM Sympos. Discrete Algorithms
  (SODA)}, pages 2495--2504, 2017.

\bibitem{PA}
Jeff~M. Phillips and Pankaj~K. Agarwal.
\newblock On bipartite matching under the {RMS} distance.
\newblock In {\em Proc. 18th Canad. Conf. Comput. Geom. (CCCG)}, pages
  143--146, 2006.

\bibitem{Ro10}
G{\"u}nter Rote.
\newblock Partial least-squares point matching under translations.
\newblock In {\em Proc. 26th European Workshop Comput. Geom. (EWCG)}, pages
  249--251, 2010.

\bibitem{SA12}
R.~Sharathkumar and Pankaj~K. Agarwal.
\newblock Algorithms for the transportation problem in geometric settings.
\newblock In {\em Proc. 23rd Annu. ACM-SIAM Sympos. Discrete Algorithms
  (SODA)}, pages 306--317, 2012.

\bibitem{SA122}
R.~Sharathkumar and Pankaj~K. Agarwal.
\newblock A near-linear time $\varepsilon$-approximation algorithm for
  geometric bipartite matching.
\newblock In {\em Proc. 44th Annu. ACM Sympos. Theory Comput. (STOC)}, pages
  385--394, 2012.

\bibitem{VA99}
Kasturi~R. Varadarajan and Pankaj~K. Agarwal.
\newblock Approximation algorithms for bipartite and non-bipartite matching in
  the plane.
\newblock In {\em Proc. 10th Annu. ACM-SIAM Sympos. Discrete Algorithms
  (SODA)}, pages 805--814, 1999.

\bibitem{Vel}
Remco~C. Veltkamp.
\newblock Shape matching: Similarity measures and algorithms.
\newblock In {\em Proc. Intl. Conf. Shape Modeling and Applications}, pages
  188--197, 2001.

\end{thebibliography}

\end{document}